\documentclass[11pt]{article}

\usepackage[margin=1in]{geometry}
\usepackage{amsmath,amssymb,amsthm}
\usepackage{bm}
\usepackage{mathtools}
\usepackage{graphicx}
\usepackage{booktabs}
\usepackage{tabularx}
\usepackage{algorithm}
\usepackage{algpseudocode}
\usepackage[colorlinks=true,linkcolor=blue,citecolor=blue,urlcolor=blue]{hyperref}
\usepackage[round]{natbib}
\usepackage{enumitem}

\newtheorem{theorem}{Theorem}[section]
\newtheorem{proposition}[theorem]{Proposition}
\newtheorem{lemma}[theorem]{Lemma}

\newtheorem{assumption}{Assumption}

\newtheorem{remark}[theorem]{Remark}

\DeclareMathOperator*{\argmin}{arg\,min}
\DeclareMathOperator{\conv}{conv}

\title{\textbf{Risk-Set Transported Synthetic Control with Difference-in-Differences Adjustment \\ under Staggered Treatment Adoption}}

\author{
  Mojtaba Eslami \\[4pt] 
  \small University of Calgary, \\  
  \small \texttt{mojtaba.eslami@alumni.ucalgary.ca}
}

\date{June 01, 2026}

\begin{document}

\maketitle
\vspace{-2.5em}
\begin{center}
\textit{Working paper. Comments welcome.}
\end{center}
\vspace{1em}

\begin{abstract}
\noindent
In staggered treatment-adoption designs, later-treated units can serve as controls for an earlier-treated cohort only until their own treatment begins. The admissible donor set therefore contracts with event time. Fixing the donor pool at the longest evaluation horizon discards temporarily eligible donors, while independently re-estimating synthetic-control weights at every horizon can make the counterfactual sensitive to mechanical changes in donor composition. We propose \emph{Risk-Set Transported Synthetic Control with Difference-in-Differences Adjustment} (RT-SC-DiD). At each cohort and event-time horizon, the estimator fits weights on the currently untreated donor set while regularizing them toward a transported reference that reallocates exiting donors' weight toward similar surviving donors. A DiD baseline correction removes persistent level differences. We characterize the distortion induced by reoptimization, derive the loading change caused by naive deletion and renormalization, and provide a conditional recursive bound for error propagation across horizons under an explicitly stated (not derived) regularity condition on the transport map. We also introduce donor-support diagnostics and a donor-only placebo procedure for selecting the transport penalty. In an 80-replication pilot method comparison, together with a separate 40-replication-per-value transport-strength sensitivity analysis, intermediate transport regularization lowers average RMSE relative to independent horizon-specific estimation and to strong anchoring, consistent with the bias--variance motivation of the method rather than a result we prove. The method is intended for settings in which later-treated units contain useful short-horizon information but the admissible donor set contracts materially over the evaluation window. To the best of our knowledge, existing staggered synthetic-control and SDiD methods do not explicitly regularize within-cohort weight sequences toward transported references as the admissible donor set contracts across event-time horizons.
\end{abstract}

\noindent\textbf{Keywords:} synthetic control, difference-in-differences, staggered adoption, panel data, causal inference, not-yet-treated units, transport regularization

\section{Introduction}
\label{sec:intro}

A large share of modern policy evaluation relies on panel data in which units adopt a treatment at different calendar dates rather than simultaneously. Environmental regulations, hospital technologies, transportation policies, and educational reforms are all frequently rolled out sequentially across jurisdictions, facilities, or firms. This staggered structure has motivated a substantial literature on difference-in-differences (DiD) estimation with variation in treatment timing, most prominently the group-time average treatment effect framework of \citet{callaway2021difference}, alongside related estimators by \citet{sun2021estimating}, \citet{de2020two}, \citet{borusyak2024revisiting}, and \citet{goodman2021difference}. A parallel literature extends synthetic control methods \citep{abadie2003economic,abadie2010synthetic} to panels with multiple treated units and staggered timing, including the partially pooled synthetic control of \citet{ben2021synthetic}, the synthetic difference-in-differences (SDiD) estimator of \citet{arkhangelsky2021synthetic}, and matrix-completion approaches to counterfactual imputation \citep{xu2017generalized,athey2021matrix}.

A recurring feature of staggered designs is that units treated later than a given cohort can, prior to their own treatment date, serve as informative controls for that cohort's counterfactual path. This idea underlies the ``not-yet-treated'' comparison group in \citet{callaway2021difference} and is explicitly acknowledged in the staggered synthetic control literature \citep{ben2021synthetic}: the set of units eligible to serve as donors for cohort $g$ at event time $h$ depends on $h$, because a donor whose own treatment begins before period $g+h$ can no longer be considered untreated at that date. In practice, however, existing implementations typically resolve this dependence by restricting attention to donors that remain untreated through the \emph{longest} horizon under study. This produces a synthetic counterfactual that is internally consistent across horizons, but it discards potentially valuable short-horizon information from units that are only temporarily eligible, and it can degrade pre-treatment fit when few units remain untreated through the full evaluation window.

The alternative of re-estimating synthetic weights independently at every horizon, using the largest currently valid donor set, avoids discarding information but introduces a new problem: because the donor pool itself changes mechanically as units become treated, the estimated counterfactual path can exhibit jumps that reflect \emph{donor composition change} rather than genuine dynamics in the treatment effect. Distinguishing the two is essential for credible event-study estimation, since a central empirical use of the event-time profile $\{\widehat\tau_{g,0}, \widehat\tau_{g,1}, \dots\}$ is precisely to characterize how treatment effects evolve.

This paper develops an estimator, \emph{Risk-Set Transported Synthetic Control with DiD Adjustment} (RT-SC-DiD), designed to use the full time-varying donor risk set while explicitly controlling for the instability that a shrinking risk set can introduce. The estimator retains a horizon-specific synthetic control fit to the largest currently valid risk set, but regularizes the horizon-$h$ weights toward a \emph{transported} version of the horizon-$(h-1)$ weights, in which the weight held by donors who exit the risk set between $h-1$ and $h$ is redistributed to the remaining donors in proportion to pre-treatment similarity. A difference-in-differences residualization step removes persistent level gaps between the treated cohort and the weighted donor path, and a partially pooled objective permits the procedure to share information across cohorts evaluated at the same horizon when cohort-specific donor pools are thin.

We view this problem as the central object of the paper. Terminology matters here: donors do not exit the risk set because of missing data or sample selection on outcomes -- their exit time $G_i$ is a deterministic function of their own treatment date and is known in advance for units already treated. We therefore avoid the phrase ``endogenous donor attrition,'' which more naturally evokes a missing-data or sample-selection problem, and instead describe the phenomenon as an \emph{informative contraction of the admissible donor set}: the set of units eligible to serve as controls shrinks deterministically with $h$, and whether the units that exit are systematically different from those that remain (in a sense made precise by Assumption~\ref{ass:attrition}) is a substantive question that must be checked, not assumed away by the fact that exit itself is mechanical. We use ``donor attrition'' informally throughout as shorthand for this contraction, not for missingness.

We do not claim to be the first to use not-yet-treated units as synthetic control donors, the first to estimate dynamic treatment effects under staggered adoption, or the first to combine synthetic weighting with a DiD-style level adjustment; each of these ideas has clear antecedents, discussed in Section~\ref{sec:related}. What we believe is new is a transport-regularized sequence of horizon-specific donor weights, together with a proved bound relating transport distance to reweighting-induced counterfactual distortion, a corrected account of the bias in the naive alternative (deletion-and-renormalization), and a recursive bound on error propagation across horizons induced by the transport mechanism itself -- together with a diagnostic toolkit built around these quantities. We are explicit throughout about which results are proved, which are definitions or accounting identities, and which are proposed but not yet fully executed: Section~\ref{sec:simulation-illustration} reports a small illustrative pilot simulation run for this draft, but the fuller simulation design of Section~\ref{sec:simulation} and the empirical application of Section~\ref{sec:application} remain proposed rather than completed.

\paragraph{Roadmap.} Section~\ref{sec:related} situates the proposal relative to existing staggered DiD and synthetic control methods. Section~\ref{sec:setup} defines notation and estimands. Section~\ref{sec:riskset} formalizes horizon-dependent donor risk sets and the donor-attrition problem. Section~\ref{sec:estimator} develops the transport operator and the RT-SC-DiD estimator, including DiD residualization and partial pooling across cohorts. Section~\ref{sec:identification} states identifying assumptions under an interactive fixed-effects model. Section~\ref{sec:theory} derives an error decomposition and states the main theoretical propositions. Section~\ref{sec:tuning} discusses tuning-parameter selection via donor-only placebo validation. Section~\ref{sec:inference} discusses inference. Section~\ref{sec:diagnostics} introduces donor-support diagnostics. Section~\ref{sec:simulation} lays out a Monte Carlo design. Section~\ref{sec:application} discusses criteria for an empirical application and a decomposition of what the method adds relative to fixed-pool alternatives. Section~\ref{sec:falsification} describes falsification and sensitivity checks. Section~\ref{sec:discussion} discusses limitations and scope, and Section~\ref{sec:conclusion} concludes.

\section{Related literature}
\label{sec:related}

\subsection{Group-time treatment effects and not-yet-treated comparisons}
\citet{callaway2021difference} define the group-time average treatment effect $ATT(g,t)$ and estimate it using either a never-treated comparison group or a not-yet-treated comparison group, the latter allowing units that will eventually be treated to serve as controls prior to their own adoption date. This is the origin of the risk-set idea we build on: the admissible comparison group for cohort $g$ at calendar time $t$ is itself a function of $t$. Our contribution relative to this literature is not the use of not-yet-treated controls \emph{per se}, but the replacement of a simple (typically unweighted or propensity-weighted) comparison-group average with a synthetic, pre-treatment-balancing combination of donors, together with an explicit treatment of how that combination should evolve as the risk set shrinks.

\subsection{Synthetic control under staggered adoption}
\citet{ben2021synthetic} extend synthetic control methods to settings with multiple treated units adopted at different times, and introduce partial pooling of unit weights across treated units to improve the quality of the estimated counterfactual when any single treated unit's donor pool is thin. Their framework explicitly allows a donor set that varies across treated units and, implicitly, across event times, but for exposition and empirical stability they typically restrict to a fixed donor pool untreated through the relevant evaluation horizon. Our estimator can be viewed as an extension of this framework along the horizon dimension: rather than treating the donor pool as fixed once a maximum horizon is chosen, we allow it to expand at short horizons and impose an explicit continuity restriction as it contracts.

\subsection{Synthetic difference-in-differences}
\citet{arkhangelsky2021synthetic} combine unit weights and time weights with a two-way-fixed-effects-style regression to construct the SDiD estimator, which nests both synthetic control and DiD as special cases. Staggered-adoption implementations of SDiD, including sequential application of SDiD to appropriately defined cohort-specific panels, already exist in applied work. Our DiD residualization step (Section~\ref{sec:estimator}) is in the spirit of SDiD's level correction, and we adopt this proximate estimator as a natural point of comparison in our simulation design (Section~\ref{sec:simulation}), but we do not claim novelty for the residualization mechanism itself. The contribution specific to this paper is confined to how the \emph{unit weights} are constructed and propagated across a sequence of shrinking donor sets.

\subsection{Matrix completion and interactive fixed-effects estimators}
\citet{xu2017generalized} and \citet{athey2021matrix} impute untreated potential outcomes using low-rank factor structure estimated from the full panel of treated and control observations. These estimators do not construct explicit unit weights and are not restricted to convex combinations of donors, but they share with our approach the underlying interactive fixed-effects model used for identification in Section~\ref{sec:identification}. We include matrix completion as a comparator in Section~\ref{sec:simulation} because it addresses a similar problem -- imputing $Y_{it}(\infty)$ under staggered treatment -- through a different modeling route that does not explicitly confront donor-set eligibility as a weighting constraint.

\subsection{Inference for staggered synthetic control}
Cao, Lu, and Wu (2026) develop synthetic-control estimation and asymptotically valid inference for dynamic average treatment effects under staggered adoption \citep{cao2026synthetic}. Their contribution concerns identification, estimation, and asymptotically valid inference using variation from staggered treatment timing, including a test for structural instability in the fitted model. RT-SC-DiD differs by introducing explicit cross-horizon transport regularization for a sequence of shrinking admissible donor sets; the precise relationship between their donor construction and the risk-set sequence $\mathcal{D}_{g,h}$ used here requires careful comparison that we leave for a revised version of this paper, since it bears directly on the inference strategy of Section~\ref{sec:inference}. \citet{cattaneo2025scpi} develop synthetic control prediction intervals that provide valid in-sample and out-of-sample uncertainty quantification under staggered adoption, including treatment-anticipation and unit-time-level inference.

Two further papers extend SDiD specifically to dynamic, staggered settings and belong in the same comparison as RT-SC-DiD. \citet{arkhangelsky2024sequential} propose Sequential SDiD, which applies the SDiD estimator of \citet{arkhangelsky2021synthetic} sequentially to appropriately aggregated cohort data -- using estimates for early-adopting cohorts to construct counterfactuals for later ones -- and prove asymptotic equivalence to an infeasible oracle OLS estimator under a linear interactive-fixed-effects model. \citet{ciccia2024eventstudy} shows that the \citet{arkhangelsky2021synthetic} estimator can be disaggregated directly into dynamic, per-event-time treatment-effect estimators in both simple and staggered designs. Both are close competitors to RT-SC-DiD and both belong in Table~\ref{tab:comparison}.

We separately note that \citet{rho2026timeaware} propose Time-Aware Synthetic Control (TASC), which embeds synthetic control in a state-space model with a Kalman filter and Rauch--Tung--Striebel smoother to exploit temporal ordering and trend structure among pre-treatment periods. TASC addresses a different problem than the one studied here -- it targets the assumption that pre-treatment time indices are exchangeable, not horizon-dependent donor eligibility -- so we do not include it in Table~\ref{tab:comparison}.

\subsection{Comparison summary}
Table~\ref{tab:comparison} summarizes how RT-SC-DiD relates to the closest existing methods along the dimensions most relevant to the problem addressed here. The distinguishing feature of RT-SC-DiD is the combination of the last two columns: it uses the full horizon-varying set of currently eligible controls (like the not-yet-treated comparison group in \citealp{callaway2021difference} and the staggered donor pools in \citealp{ben2021synthetic}), but couples adjacent horizons through an explicit, penalized transport map with a proved distortion bound (Proposition~\ref{prop:reopt}), which none of the comparators formalize.

\begin{table}[htbp]
\centering
\small
\begin{tabularx}{\textwidth}{lXXXX}
\toprule
Method & Time-varying valid controls & Dynamic effects & Unit weights & Cross-horizon coupling \\
\midrule
Callaway--Sant'Anna \citeyearpar{callaway2021difference} & yes & yes & no (simple avg.) & no \\
Partially pooled SCM \citeyearpar{ben2021synthetic} & limited (fixed pool typical) & yes & yes & no \\
Synthetic DiD \citeyearpar{arkhangelsky2021synthetic} & no (single treated period) & via staggered variants & yes & no \\
Sequential SDiD \citeyearpar{arkhangelsky2024sequential} & yes (sequential cohort imputation) & yes & yes & sequential, oracle-equivalent \\
Event-study SDiD \citeyearpar{ciccia2024eventstudy} & yes & yes (disaggregated) & yes & no \\
Cao--Lu--Wu SCM inference \citeyearpar{cao2026synthetic} & yes & yes & yes & not by design \\
SCPI \citeyearpar{cattaneo2025scpi} & yes & yes & yes & not by design \\
RT-SC-DiD (this paper) & yes & yes & yes & transport penalty, proved bound \\
\bottomrule
\end{tabularx}
\caption{Positioning relative to closely related methods. ``Cross-horizon coupling'' refers to an explicit mechanism linking weight estimates at adjacent event-time horizons, as opposed to independent re-estimation or a fixed donor pool. Sequential SDiD's coupling is sequential imputation across cohorts (not horizons within a cohort) with a proved oracle-equivalence result; it addresses a related but distinct coupling problem from the one studied here.}
\label{tab:comparison}
\end{table}

\subsection{What we believe is new}
Existing staggered synthetic-control methods recognize that donor eligibility is horizon-dependent, and existing practice resolves this dependence either by fixing the donor pool at the longest horizon or by re-estimating weights independently at each horizon. To the best of our knowledge, no existing method develops a transport-regularized sequence of horizon-specific synthetic weights explicitly designed to isolate treatment-effect dynamics from artifacts of a contracting donor set, together with a proved bound on the resulting distortion and diagnostics quantifying it in practice. We regard this as a methodological refinement within an active and already crowded literature, not as the introduction of a fundamentally new causal-inference problem, and we return to this scoping in Section~\ref{sec:discussion}.

\section{Setup and estimands}
\label{sec:setup}

Let units be indexed by $i = 1, \dots, N$ and periods by $t = 1, \dots, T$. Let $G_i \in \{1, \dots, T, \infty\}$ denote the first period in which unit $i$ receives treatment, with $G_i = \infty$ denoting a never-treated unit. Treatment status is $D_{it} = \mathbf{1}\{t \geq G_i\}$. Let $Y_{it}(g)$ denote the potential outcome that would be observed at $t$ if treatment began at date $g$, for $g \in \{1,\dots,T,\infty\}$, and write $Y_{it}(\infty)$ for the potential outcome under no treatment through the end of the observation window; we use $Y_{it}(\infty)$ rather than the shorthand $Y_{it}(0)$ used in some of the literature throughout, to avoid any confusion between ``potential outcome under the never-treated regime'' and ``value zero,'' which matters once we later write weighted sums and differences of these potential outcomes. We maintain no anticipation, $Y_{it}(g) = Y_{it}(\infty)$ for $t < g$, so that the observed outcome is
\begin{equation}
Y_{it} = \mathbf{1}\{t < G_i\}\, Y_{it}(\infty) + \mathbf{1}\{t \geq G_i\}\, Y_{it}(G_i).
\end{equation}
Let $\mathcal{I}_g = \{i : G_i = g\}$ denote treatment cohort $g$, with cardinality $N_g$, and let $\bar Y_{g,t} = N_g^{-1} \sum_{i \in \mathcal{I}_g} Y_{it}$ denote the cohort mean at $t$. The principal estimand is the cohort-time average treatment effect on the treated,
\begin{equation}
\tau_{g,h} \equiv ATT(g, g+h) = \mathbb{E}\big[Y_{i,g+h}(g) - Y_{i,g+h}(\infty) \mid G_i = g\big], \qquad h \geq 0,
\end{equation}
following \citet{callaway2021difference}. A cohort-level summary can be formed as $ATT(g) = \sum_{t \geq g} \omega_{g,t}\, ATT(g,t)$ for weights $\omega_{g,t}$ summing to one, and overall or event-time summaries are discussed in Section~\ref{sec:aggregation}.

\paragraph{Unit-level versus cohort-level donor weighting.} The estimator below assigns synthetic weights $\gamma_{ig,h}$ to individual donor \emph{units} $i$, not to cohort means; a cohort-mean-weighted version, in which the donor pool consists of cohort averages $\bar Y_{r,t}$ for $r \neq g$ rather than unit-level series $Y_{it}$, is a special case obtained by first replacing every donor's raw series with its cohort mean and treating each cohort as a single ``super-donor.'' We work at the unit level throughout because it nests the cohort-level version, allows finer control over pre-treatment fit, and makes the leave-one-donor-out diagnostics of Section~\ref{sec:diagnostics} operate at the same resolution as the estimator itself; the identification and variance arguments in Sections~\ref{sec:identification}--\ref{sec:theory} are stated for this unit-level construction and would need to be re-derived, not merely relabeled, for a cohort-aggregated version.

\section{Horizon-dependent donor risk sets}
\label{sec:riskset}

For cohort $g$ and event time $h \geq 0$, define the valid donor \emph{risk set}
\begin{equation}
\mathcal{D}_{g,h} = \{ i : G_i > g + h \}.
\end{equation}
Never-treated units belong to every risk set. A unit with $G_i = g + r$ is a valid donor for horizons $h < r$ and becomes ineligible for $h \geq r$. The risk sets are nested by construction,
\begin{equation}
\mathcal{D}_{g,h+1} \subseteq \mathcal{D}_{g,h},
\end{equation}
and we define the \emph{attrition set}
\begin{equation}
\mathcal{A}_{g,h} = \mathcal{D}_{g,h-1} \setminus \mathcal{D}_{g,h}
\end{equation}
as the donors eligible at $h-1$ but not at $h$. This nesting property is the source of the central difficulty addressed by the paper: any estimator that uses the full risk set at each horizon must confront the fact that the composition of that risk set changes deterministically with $h$, so that naively comparing $\widehat\tau_{g,h-1}$ and $\widehat\tau_{g,h}$ conflates treatment dynamics with donor-composition change.

\subsection{Baseline: independent horizon-specific synthetic control}
Let $\gamma_{g,h} = (\gamma_{ig,h})_{i \in \mathcal{D}_{g,h}}$ denote donor weights on the simplex $\Delta(\mathcal{D}_{g,h}) = \{\gamma \geq 0 : \sum_i \gamma_i = 1\}$. Let $\mathcal{P}_g = \{g - L_g, \dots, g-1\}$ denote a pre-treatment fitting window of length $L_g$, and let $v_{g,s} \geq 0$ denote pre-treatment period weights. A horizon-by-horizon synthetic control solves
\begin{equation}
\widehat\gamma_{g,h}^{\,\mathrm{ind}} = \argmin_{\gamma \in \Delta(\mathcal{D}_{g,h})} \left\{ \sum_{s \in \mathcal{P}_g} v_{g,s} \Big(\bar Y_{g,s} - \textstyle\sum_{i \in \mathcal{D}_{g,h}} \gamma_i Y_{is}\Big)^2 + \lambda \|\gamma\|_2^2 \right\},
\end{equation}
with associated effect estimate $\widehat\tau_{g,h}^{\,\mathrm{ind}} = \bar Y_{g,g+h} - \sum_{i \in \mathcal{D}_{g,h}} \widehat\gamma_{ig,h}^{\,\mathrm{ind}} Y_{i,g+h}$. Even though every donor used at every horizon is valid, the sequence $\widehat\tau_{g,0}^{\,\mathrm{ind}}, \widehat\tau_{g,1}^{\,\mathrm{ind}}, \dots$ can move sharply whenever $\mathcal{A}_{g,h}$ removes a donor that previously carried substantial weight, purely as a mechanical consequence of re-optimization on a different simplex.

\section{The RT-SC-DiD estimator}
\label{sec:estimator}

\subsection{Transport operator}
Suppose weights $\widehat\gamma_{g,h-1}$ have already been estimated on $\mathcal{D}_{g,h-1}$. When donors in $\mathcal{A}_{g,h}$ exit, $\widehat\gamma_{g,h-1}$ is no longer a valid element of $\Delta(\mathcal{D}_{g,h})$, and it must be mapped onto the new simplex before it can serve as a stabilizing reference. The simplest map is proportional renormalization,
\begin{equation}
\widetilde\gamma_{ig,h-1} = \frac{\widehat\gamma_{ig,h-1}}{\sum_{j \in \mathcal{D}_{g,h}} \widehat\gamma_{jg,h-1}}, \qquad i \in \mathcal{D}_{g,h},
\label{eq:renorm}
\end{equation}
which redistributes an exiting donor's mass across all remaining donors in proportion to their \emph{existing} weight, irrespective of pre-treatment similarity. We instead redistribute exited mass toward donors with similar pre-treatment trajectories, while nesting \eqref{eq:renorm} as an explicit limiting case rather than an approximate one. Define the pre-treatment dissimilarity
\begin{equation}
d_{ij,g} = \sum_{s \in \mathcal{P}_g} v_{g,s} (Y_{is} - Y_{js})^2,
\end{equation}
and, for each exiting donor $j \in \mathcal{A}_{g,h}$, a redistribution kernel over the remaining risk set that combines a similarity term with the surviving donor's own pre-attrition weight,
\begin{equation}
\pi_{ij,g,h} = \frac{\widehat\gamma_{ig,h-1} \exp(-d_{ij,g}/\kappa)}{\sum_{\ell \in \mathcal{D}_{g,h}} \widehat\gamma_{\ell g,h-1} \exp(-d_{\ell j,g}/\kappa)}, \qquad i \in \mathcal{D}_{g,h},
\label{eq:kernel}
\end{equation}
with bandwidth $\kappa > 0$. The \emph{transported reference weight} is
\begin{equation}
\gamma^{\mathrm{tr}}_{ig,h} = \widehat\gamma_{ig,h-1} + \sum_{j \in \mathcal{A}_{g,h}} \widehat\gamma_{jg,h-1}\, \pi_{ij,g,h}, \qquad i \in \mathcal{D}_{g,h},
\label{eq:transport}
\end{equation}
which by construction satisfies $\sum_{i \in \mathcal{D}_{g,h}} \gamma^{\mathrm{tr}}_{ig,h} = 1$, since $\sum_i \pi_{ij,g,h} = 1$ for each $j$ and the exiting donors' total mass $\sum_{j \in \mathcal{A}_{g,h}} \widehat\gamma_{jg,h-1}$ is fully reassigned. Because the kernel \eqref{eq:kernel} weights each surviving donor $i$'s share of $j$'s exited mass by $i$'s own pre-attrition weight $\widehat\gamma_{ig,h-1}$ in addition to its similarity to $j$, letting $\kappa \to \infty$ removes the similarity term entirely and leaves
\begin{equation}
\pi_{ij,g,h} \;\longrightarrow\; \frac{\widehat\gamma_{ig,h-1}}{\sum_{\ell \in \mathcal{D}_{g,h}} \widehat\gamma_{\ell g,h-1}}, \qquad \text{independent of } j,
\end{equation}
so that $\gamma^{\mathrm{tr}}_{ig,h} \to \widehat\gamma_{ig,h-1} \big(1 + m_{g,h} / \sum_{\ell \in \mathcal{D}_{g,h}} \widehat\gamma_{\ell g,h-1}\big) = \widehat\gamma_{ig,h-1} / \sum_{\ell \in \mathcal{D}_{g,h}} \widehat\gamma_{\ell g,h-1} = \widetilde\gamma_{ig,h-1}$ exactly, where $m_{g,h} = \sum_{j \in \mathcal{A}_{g,h}} \widehat\gamma_{jg,h-1}$ is the total exiting mass. (A softmax kernel that omits the leading $\widehat\gamma_{ig,h-1}$ factor converges instead to the \emph{uniform}-across-survivors allocation as $\kappa \to \infty$, a less natural limit than proportional renormalization; \eqref{eq:kernel} is constructed specifically so that $\kappa \to \infty$ recovers \eqref{eq:renorm} exactly.) As $\kappa \to 0$, each exiting donor's mass is instead assigned entirely to its single nearest surviving donor (subject to ties), so $\kappa$ indexes a genuine continuum between global proportional preservation of the pre-attrition representation and local replacement by the most similar available donor.

\begin{remark}[Scale of the similarity metric]
The dissimilarity $d_{ij,g}$ as defined compares raw outcome levels and is therefore sensitive to differences in scale, volatility, and trend across donors that need not reflect differences in the latent structure relevant for synthetic control balance (Assumption~\ref{ass:stability}). In practice we recommend one of: (i) a level-and-change standardized version, $d_{ij,g} = \sum_{s \in \mathcal{P}_g} v_{g,s} \sigma_s^{-2}(Y_{is}-Y_{js})^2 + \eta \sum_{s} v_{g,s}\sigma_{\Delta,s}^{-2}(\Delta Y_{is} - \Delta Y_{js})^2$, with $\sigma_s^2$ and $\sigma_{\Delta,s}^2$ cross-sectional variances of levels and first differences at $s$; (ii) a residualized-trajectory distance computed after removing unit and period means, $\check Y_{is} = Y_{is} - \bar Y_{i,\mathrm{pre}} - \bar Y_{\cdot s} + \bar Y_{\cdot,\mathrm{pre}}$; or (iii) when factor loadings are separately estimated, a Mahalanobis distance $d_{ij,g} = (\widehat\lambda_i - \widehat\lambda_j)^\top \widehat\Omega^{-1} (\widehat\lambda_i - \widehat\lambda_j)$ in loading space, which ties the transport metric directly to the object that Assumption~\ref{ass:stability} requires to be balanced. We use the raw-level form in the main text only for notational simplicity and do not intend it as a recommendation.
\end{remark}

\begin{remark}[The kernel transport as a special case of feature-matching transport]
Equation~\eqref{eq:transport} redistributes each exiting donor's mass individually. A more general formulation poses the transported reference directly as the solution to a feature-matching problem: given a feature map $Z_i$ (pre-treatment outcomes, covariates, or estimated loadings) and a norm $\|\cdot\|_W$,
\begin{equation}
\gamma^{\mathrm{tr}}_{g,h} = \argmin_{\gamma \in \Delta(\mathcal{D}_{g,h})} \Big\{ \Big\| \textstyle\sum_{i \in \mathcal{D}_{g,h-1}} \widehat\gamma_{ig,h-1} Z_i - \sum_{i \in \mathcal{D}_{g,h}} \gamma_i Z_i \Big\|_W^2 + \eta\, D\big(\gamma, \widetilde\gamma_{g,h-1}\big) \Big\},
\label{eq:transport-opt}
\end{equation}
for a divergence $D$ (squared $\ell_2$, KL, or negative entropy) that anchors $\gamma$ near the renormalized surviving weights $\widetilde\gamma_{g,h-1}$ of \eqref{eq:renorm}. Problem~\eqref{eq:transport-opt} asks directly for the new valid weight vector that best preserves the previous synthetic representation \emph{in feature space}, rather than preserving it donor-by-donor. The closed-form kernel transport \eqref{eq:transport}--\eqref{eq:kernel} is a computationally convenient special case obtained when $D$ is taken to be a per-exiting-donor local reassignment rather than a single global optimization, and it coincides with an optimal transport-style solution to \eqref{eq:transport-opt} only under additional conditions (e.g., $\eta \to \infty$ with $Z_i = Y_{i,\mathcal{P}_g}$ and a locally linear cost). We use \eqref{eq:transport} in the estimator below for its closed form and low computational cost, and note \eqref{eq:transport-opt} as the more general and, we think, theoretically cleaner formulation for future work.
\end{remark}

\begin{remark}[The transport kernel is undefined at zero surviving mass, and needs a fallback]
\label{rem:zero-mass}
The kernel \eqref{eq:kernel} divides by $\sum_{\ell \in \mathcal{D}_{g,h}} \widehat\gamma_{\ell g,h-1} \exp(-d_{\ell j,g}/\kappa)$, which is exactly zero whenever \emph{all} of the pre-attrition weight is concentrated on donors in the exiting set $\mathcal{A}_{g,h}$, i.e., $\sum_{i \in \mathcal{D}_{g,h}} \widehat\gamma_{ig,h-1} = 0$. This is not a measure-zero corner case: sparse synthetic control solutions routinely place all weight on one or two donors, and if those specific donors happen to exit together, the surviving mass is exactly zero. Proportional renormalization \eqref{eq:renorm} is undefined in exactly the same case. We therefore adopt, throughout the paper -- in the estimator, in the theoretical discussion of Section~\ref{sec:theory}, and in the simulation of Section~\ref{sec:simulation-illustration} -- a base-measure-augmented kernel with a fallback that remains well defined at zero surviving mass:
\begin{equation}
\pi_{ij,g,h} = \frac{\big(\widehat\gamma_{ig,h-1} + \epsilon_0\, q_{ig,h}\big) \exp(-d_{ij,g}/\kappa)}{\sum_{\ell \in \mathcal{D}_{g,h}} \big(\widehat\gamma_{\ell g,h-1} + \epsilon_0\, q_{\ell g,h}\big) \exp(-d_{\ell j,g}/\kappa)}, \qquad i \in \mathcal{D}_{g,h},
\label{eq:kernel-fallback}
\end{equation}
for a small fallback weight $\epsilon_0 > 0$ and a base probability measure $q_{g,h}$ on $\mathcal{D}_{g,h}$ (we use the uniform measure, $q_{ig,h} = 1/|\mathcal{D}_{g,h}|$, in the implementation used for Section~\ref{sec:simulation-illustration}). Because $\epsilon_0 > 0$, the denominator of \eqref{eq:kernel-fallback} is bounded below by $\epsilon_0 \min_i q_{ig,h} \exp(-\max_{\ell} d_{\ell j,g}/\kappa) > 0$ for \emph{every} $\gamma \in \Delta(\mathcal{D}_{g,h-1})$, not merely away from some boundary, so $T_h$ is well defined and, in fact, smooth on the entire (compact) simplex once \eqref{eq:kernel-fallback} is used.

We flag an important distinction between the two kernels' $\kappa \to \infty$ limits, since it is easy to conflate them. Section~\ref{sec:estimator} showed that, under the \emph{plain} kernel \eqref{eq:kernel} with fixed pre-attrition weights, $\kappa \to \infty$ recovers proportional renormalization \eqref{eq:renorm} \emph{exactly} whenever surviving mass is positive. Under the \emph{fallback} kernel \eqref{eq:kernel-fallback} with $\epsilon_0 > 0$ held fixed, the same limit instead gives
\begin{equation}
\pi_{ij,g,h}^{\epsilon_0} \;\xrightarrow{\ \kappa \to \infty\ }\; \frac{\widehat\gamma_{ig,h-1} + \epsilon_0\, q_{ig,h}}{\sum_{\ell \in \mathcal{D}_{g,h}} \widehat\gamma_{\ell g,h-1} + \epsilon_0}, \qquad \text{not} \qquad \frac{\widehat\gamma_{ig,h-1}}{\sum_{\ell \in \mathcal{D}_{g,h}} \widehat\gamma_{\ell g,h-1}},
\end{equation}
an $\epsilon_0$-regularized version of proportional renormalization rather than proportional renormalization itself. The fallback kernel therefore does \emph{not} exactly nest \eqref{eq:renorm} at $\kappa \to \infty$ for any fixed $\epsilon_0 > 0$; it recovers \eqref{eq:renorm} only in the iterated limit $\kappa \to \infty$ followed by $\epsilon_0 \to 0$, and only when surviving mass is positive (the case in which \eqref{eq:renorm} is itself well defined). This is the price of the fallback: it buys a kernel that is well defined everywhere on the simplex (Remark~\ref{rem:zero-mass} above), at the cost of the $\kappa \to \infty$ limit no longer being an exact, single-limit statement of proportional renormalization for the $\epsilon_0$ actually used in estimation. Because the fallback removes the zero-denominator boundary entirely, it also removes the need to restrict the domain on which we ask for Lipschitz continuity of $T_h$ below (Assumption~\ref{ass:lipschitz}): with \eqref{eq:kernel-fallback}, $T_h$ is defined and, we expect, continuous on all of $\Delta(\mathcal{D}_{g,h-1})$.
\end{remark}

\subsection{Regularized horizon-specific weights}
Let $B_{g,h}(\gamma) = \sum_{s \in \mathcal{P}_g} v_{g,s} \big(\bar Y_{g,s} - \sum_{i \in \mathcal{D}_{g,h}} \gamma_i Y_{is}\big)^2$ denote pre-treatment imbalance on risk set $\mathcal{D}_{g,h}$. At $h = 0$, weights are estimated on the full initial risk set,
\begin{equation}
\widehat\gamma_{g,0} = \argmin_{\gamma \in \Delta(\mathcal{D}_{g,0})} \big\{ B_{g,0}(\gamma) + \lambda \|\gamma\|_2^2 \big\}.
\end{equation}
For $h \geq 1$, weights solve
\begin{equation}
\widehat\gamma_{g,h} = \argmin_{\gamma \in \Delta(\mathcal{D}_{g,h})} \Big\{ \underbrace{B_{g,h}(\gamma)}_{\text{pre-treatment balance}} + \underbrace{\lambda \|\gamma\|_2^2}_{\text{weight dispersion}} + \underbrace{\rho_h \|\gamma - \gamma^{\mathrm{tr}}_{g,h}\|_2^2}_{\text{cross-horizon continuity}} \Big\},
\label{eq:rtsdid-obj}
\end{equation}
where $\gamma^{\mathrm{tr}}_{g,h}$ is defined in \eqref{eq:transport}. Problem \eqref{eq:rtsdid-obj} is a convex quadratic program with linear equality and inequality constraints and is straightforward to solve numerically for each $(g,h)$. The transport penalty does not fix the weights; it allows them to change as needed to restore pre-treatment balance after attrition, while penalizing reconfiguration beyond what balance requires. The horizon-specific tuning parameter $\rho_h$ governs the balance-versus-continuity trade-off and may be selected as described in Section~\ref{sec:tuning}.

\subsection{Difference-in-differences residualization}
To remove persistent level differences between the treated cohort and the donor pool, define a weighted pre-treatment baseline $\bar Y_g^{\mathrm{pre}} = \sum_{s \in \mathcal{P}_g} a_{g,s} \bar Y_{g,s}$ and, for each donor $i$, $Y_{i,g}^{\mathrm{pre}} = \sum_{s \in \mathcal{P}_g} a_{g,s} Y_{is}$, with $\sum_s a_{g,s} = 1$. The RT-SC-DiD estimator is
\begin{equation}
\widehat\tau_{g,h}^{\,RT} = \Big(\bar Y_{g,g+h} - \bar Y_g^{\mathrm{pre}}\Big) - \sum_{i \in \mathcal{D}_{g,h}} \widehat\gamma_{ig,h} \Big(Y_{i,g+h} - Y_{i,g}^{\mathrm{pre}}\Big),
\label{eq:rtsdid}
\end{equation}
equivalently $\widehat\tau_{g,h}^{\,RT} = \bar Y_{g,g+h} - \sum_i \widehat\gamma_{ig,h} Y_{i,g+h} - \widehat\alpha_{g,h}$, with intercept $\widehat\alpha_{g,h} = \bar Y_g^{\mathrm{pre}} - \sum_i \widehat\gamma_{ig,h} Y_{i,g}^{\mathrm{pre}}$. This residualization allows the estimator to perform well when the weighted donor path reproduces the \emph{changes} in the treated cohort's outcome without exactly matching its level, in the spirit of the level correction in synthetic difference-in-differences \citep{arkhangelsky2021synthetic}.

\subsection{Partial pooling across cohorts}
\label{sec:pooling}
Estimating $\widehat\gamma_{g,h}$ separately for every cohort can be unstable when a given cohort's risk set is small. Let $q_{g,h}^{\mathrm{sep}}(\gamma_{g,h}) = [B_{g,h}(\gamma_{g,h})]^{1/2}$ denote cohort-specific imbalance, and define pooled imbalance over the set of cohorts $\mathcal{G}_h$ observed at horizon $h$,
\begin{equation}
q_h^{\mathrm{pool}}(\Gamma_h) = \left[ \sum_{\ell=1}^{L} u_\ell \Big( \sum_{g \in \mathcal{G}_h} p_{g,h} \big[\bar Y_{g,g-\ell} - \textstyle\sum_i \gamma_{ig,h} Y_{i,g-\ell}\big] \Big)^2 \right]^{1/2},
\end{equation}
where $u_\ell \geq 0$ are lag weights and $p_{g,h}$ are cohort pooling weights (e.g., proportional to $N_g$). Because $\mathcal{D}_{g,h} \neq \mathcal{D}_{g',h}$ for $g \neq g'$ in general, the cohort-specific weight vectors do not natively share a common index set, so the Frobenius norm below is not literally defined on a ``ragged'' collection of vectors of different lengths. We make this precise by zero-padding every cohort's weight vector to the full $N$-dimensional unit index space,
\begin{equation}
\gamma_{ig,h} = 0 \quad \text{for } i \notin \mathcal{D}_{g,h} \ \text{(in particular for } i \in \mathcal{I}_g\text{, so a cohort cannot serve as its own donor)},
\end{equation}
so that $\Gamma_h = (\gamma_{g,h})_{g \in \mathcal{G}_h} \in \mathbb{R}^{|\mathcal{G}_h| \times N}$ is a well-defined matrix on which $\|\cdot\|_F$ is standard, with the simplex constraints applied only to each row's eligible (nonzero) entries. Joint weights solve
\begin{equation}
\widehat\Gamma_h = \argmin_{\Gamma_h} \Big\{ \nu\, [q_h^{\mathrm{pool}}(\Gamma_h)]^2 + (1-\nu) \sum_{g \in \mathcal{G}_h} p_{g,h}\, [q_{g,h}^{\mathrm{sep}}(\gamma_{g,h})]^2 + \lambda \|\Gamma_h\|_F^2 + \rho_h \|\Gamma_h - \Gamma_h^{\mathrm{tr}}\|_F^2 \Big\},
\label{eq:pooled}
\end{equation}
where $\nu = 0$ recovers fully separate cohort estimation, $\nu = 1$ emphasizes balance for the (pooling-weighted) average treated cohort, and intermediate $\nu$ interpolates. The pooling structure itself follows \citet{ben2021synthetic}; the contribution here is its combination with horizon-specific risk sets and the transport penalty $\rho_h \|\Gamma_h - \Gamma_h^{\mathrm{tr}}\|_F^2$, so that pooling and continuity are enforced jointly rather than sequentially.

\subsection{Aggregation}
\label{sec:aggregation}
For horizon $h$, define the event-time average effect $\widehat{ATT}_h = \sum_{g \in \mathcal{G}_h} p_{g,h}\, \widehat\tau_{g,h}^{\,RT}$, with $p_{g,h} = N_g / \sum_{r \in \mathcal{G}_h} N_r$ or another chosen weighting. Because $\mathcal{G}_h$ can vary with $h$, we distinguish a \emph{varying-composition} event study, which uses all cohorts observable at each horizon, from a \emph{balanced-cohort} event study restricted to cohorts observed over the full horizon range $0, \dots, H$. An overall effect is $\widehat{ATT}^{\,\mathrm{overall}} = \sum_{h=0}^H \omega_h \widehat{ATT}_h$ for policy weights $\omega_h$ summing to one (e.g., equal event-time weights or cohort-exposure weights).

\section{Identification}
\label{sec:identification}

\begin{assumption}[No anticipation]
$Y_{it}(g) = Y_{it}(\infty)$ for all $t < g$. In particular, later-treated donors are unaffected by their own eventual treatment prior to its onset.
\end{assumption}

\begin{assumption}[No interference]
$Y_{it}(\mathbf{D}) = Y_{it}(D_{it})$; one unit's treatment status does not affect another unit's potential outcomes. This assumption is worth stating explicitly here because a later-treated donor may be indirectly exposed to policies adopted by earlier cohorts (e.g., through market-level spillovers), which would violate it.
\end{assumption}

\begin{assumption}[Synthetic-span condition]
\label{ass:span}
For each cohort $g$ and horizon $h$, there exist weights $\gamma^*_{g,h} \in \Delta(\mathcal{D}_{g,h})$ such that
\begin{equation}
\bar Y_{g,t}(\infty) = \sum_{i \in \mathcal{D}_{g,h}} \gamma^*_{ig,h} Y_{it}(\infty) + r_{g,t,h}
\end{equation}
with approximation error $r_{g,t,h}$ small relative to the treatment effect of interest, for $t$ in a neighborhood of $g + h$ and over $\mathcal{P}_g$.
\end{assumption}

\begin{assumption}[Stability of the synthetic representation across pre- and post-treatment periods]
\label{ass:stability}
Under the interactive fixed-effects model $Y_{it}(\infty) = \alpha_i + \delta_t + \lambda_i^\top f_t + \varepsilon_{it}$, the weighted donor loadings implied by $\gamma^*_{g,h}$ approximate the treated cohort's average loading, $\bar\lambda_g \approx \sum_{i \in \mathcal{D}_{g,h}} \gamma^*_{ig,h} \lambda_i$, and this approximation remains adequate at $t = g+h$, not merely on $\mathcal{P}_g$.
\end{assumption}

\begin{assumption}[Noninformative donor attrition given latent structure]
\label{ass:attrition}
Conditional on treatment timing $G_i$, pre-treatment history $\mathcal{H}_{i,g-1}$, and latent loading $\lambda_i$, future untreated innovations are mean independent of donor status:
\begin{equation}
\mathbb{E}\big[\varepsilon_{i,g+h} \mid G_i, \mathcal{H}_{i,g-1}, \lambda_i \big] = 0.
\end{equation}
This assumption clarifies that a later treatment date is not equivalent to random censoring of the donor pool: donor treatment timing may be informative about $\lambda_i$ or $\mathcal{H}_{i,g-1}$, but conditional on these, it must not predict future idiosyncratic shocks. This is analogous to conditional parallel trends in the DiD literature and should be assessed through balance checks, timing models, or sensitivity analysis rather than assumed by default.
\end{assumption}

\begin{assumption}[Synthetic support]
\label{ass:overlap}
$\inf_{\gamma \in \Delta(\mathcal{D}_{g,h})} B_{g,h}(\gamma) \leq c_{g,h}$ for an acceptably small $c_{g,h}$. When the donor pool cannot approximate the treated cohort's pre-treatment path to this tolerance, the corresponding $(g,h)$ estimate should be reported as weakly identified rather than extrapolated. We label this \emph{synthetic support} rather than ``overlap'' in the propensity-score sense, since it is an observable pre-treatment fit criterion on realized outcomes rather than a statement about the support of a treatment-assignment probability.
\end{assumption}

\paragraph{Identification, estimation, and regularization approximation, kept distinct.} These are three different statements, and conflating them overstates what is established here.

\emph{Identification (an oracle statement).} Assumptions~1--3 and~\ref{ass:span} together imply the population identity
\begin{equation}
\tau_{g,h} = \mathbb{E}\Big[ Y_{g,g+h} - \sum_{i \in \mathcal{D}_{g,h}} \gamma^*_{ig,h} Y_{i,g+h} \,\Big|\, G_i = g \Big] - \alpha^*_{g,h} - \mathbb{E}[r_{g,g+h,h} \mid G_i=g],
\end{equation}
where $\gamma^*_{g,h}$ are \emph{population} synthetic weights satisfying Assumption~\ref{ass:span} and $\alpha^*_{g,h}$ is the corresponding population intercept. This is an identification result only up to the approximation error $r_{g,g+h,h}$; it says nothing yet about any estimator.

\emph{Estimation (a consistency statement).} A separate argument is required to establish that the \emph{estimated} weights and intercept converge appropriately, e.g.\ $\widehat\gamma_{g,h} \xrightarrow{p} \gamma^*_{g,h}$ or, more weakly, $\sum_i \widehat\gamma_{ig,h} Y_{i,g+h}(\infty) - \sum_i \gamma^*_{ig,h} Y_{i,g+h}(\infty) \xrightarrow{p} 0$, together with $\widehat\alpha_{g,h} \xrightarrow{p} \alpha^*_{g,h}$. This requires standard conditions on $N$, $L_g$, the conditioning of $B_{g,h}$, and $\lambda \to 0$ at a suitable rate, none of which we prove formally in this draft; we view this as an open item rather than a solved one, distinct from identification.

\emph{Regularization approximation (a statement about $\rho_h$).} For any \emph{fixed} $\rho_h > 0$, problem~\eqref{eq:rtsdid-obj} does not target $\gamma^*_{g,h}$ exactly but a regularized pseudo-target $\gamma^\dagger_{g,h}(\rho_h)$ that trades pre-treatment balance against closeness to $\gamma^{\mathrm{tr}}_{g,h}$; $\gamma^\dagger_{g,h}(\rho_h) = \gamma^*_{g,h}$ only if $\gamma^{\mathrm{tr}}_{g,h}$ itself equals $\gamma^*_{g,h}$ or if $\rho_h \to 0$. The transport penalty is therefore a \emph{prediction regularizer} whose finite-sample bias must be weighed against its variance reduction (as in Section~\ref{sec:theory}), not an identifying restriction: it does not help identify $\tau_{g,h}$, and a large $\rho_h$ chosen for stability alone can introduce bias even at the population level if $\gamma^{\mathrm{tr}}_{g,h} \neq \gamma^*_{g,h}$. Consistency of the overall procedure therefore additionally requires $\rho_h \to 0$ (or $\rho_h$ growing slower than the rate at which $\gamma^{\mathrm{tr}}_{g,h}$ approaches $\gamma^*_{g,h}$) as the pre-treatment sample grows, alongside $\lambda \to 0$.

Under Assumptions 1--6, together with the estimation-consistency and regularization-rate conditions just described, $\widehat\tau_{g,h}^{\,RT}$ targets $\tau_{g,h}$ up to the approximation error $r_{g,t,h}$ and the sampling and regularization terms characterized in Section~\ref{sec:theory}. Consistent estimation of $\gamma^*_{g,h}$ and $\alpha_{g,h}$ is not assumed here; it is stated as an open condition, distinct from identification, that a full treatment of this estimator would need to establish.

\section{Error decomposition and theoretical propositions}
\label{sec:theory}

\subsection{Error decomposition}
Under the interactive fixed-effects model of Assumption~\ref{ass:stability}, the estimation error can be written as
\begin{equation}
\widehat\tau_{g,h}^{\,RT} - \tau_{g,h} = \underbrace{\Big[\bar Y_{g,g+h}(\infty) - \textstyle\sum_i \widehat\gamma_{ig,h} Y_{i,g+h}(\infty) - \widehat\alpha_{g,h}\Big]}_{\text{counterfactual error}},
\end{equation}
which decomposes further into a systematic factor-imbalance term and stochastic terms,
\begin{equation}
\Big(\bar\lambda_g - \textstyle\sum_i \widehat\gamma_{ig,h} \lambda_i\Big)^\top \big(f_{g+h} - \bar f_g^{\mathrm{pre}}\big) \;+\; \bar\varepsilon_{g,g+h} - \textstyle\sum_i \widehat\gamma_{ig,h} \varepsilon_{i,g+h} - \bar\varepsilon_g^{\mathrm{pre}} + \textstyle\sum_i \widehat\gamma_{ig,h} \varepsilon_{i,g}^{\mathrm{pre}}.
\end{equation}
The first term is bias from imperfect factor-loading balance; the remainder is sampling noise averaged over the treated cohort and the (weighted) donor pool at both the evaluation date and the pre-treatment baseline. The following schematic inequality summarizes the components that a formal finite-sample bound would need to control; we have not derived the constants $C_{1,h},\dots,C_{4,h}$ from the model above, so we present it as an organizing decomposition rather than a proved theorem, and do not count it among the results established in this paper:
\begin{equation}
\big|\widehat\tau_{g,h}^{RT} - \tau_{g,h}\big| \;\overset{\text{schematic}}{\leq}\; C_{1,h}\, q_{g,h} + C_{2,h}\, a_{g,h} + C_{3,h}\, r_{g,h} + C_{4,h}\, n_{g,h},
\end{equation}
where $q_{g,h} = \| \bar{\mathbf{Y}}_{g,\mathrm{pre}} - \mathbf{Y}_{\mathcal{D}_{g,h},\mathrm{pre}}^\top \widehat\gamma_{g,h} \|$ is pre-treatment imbalance, $a_{g,h} = \|\widehat\gamma_{g,h} - \gamma_{g,h}^{\mathrm{tr}}\|$ is donor-attrition instability, $r_{g,h}$ is synthetic-span approximation error (Assumption~\ref{ass:span}), and $n_{g,h}$ is sampling noise. The role of the transport penalty $\rho_h$ is to trade off $q_{g,h}$ against $a_{g,h}$: larger $\rho_h$ reduces $a_{g,h}$ (and hence bounds the same-period reoptimization distortion, Proposition~\ref{prop:reopt} below) but may increase $q_{g,h}$ by preventing full re-optimization for balance; smaller $\rho_h$ does the reverse. A $\rho_h$ selected to minimize estimated out-of-sample prediction error, as in Section~\ref{sec:tuning}, targets this trade-off directly.

\subsection{Limiting and nested cases}

The following three observations record how RT-SC-DiD relates to existing practice at limiting parameter values. We state them as a single lemma rather than as separate propositions, since each is an immediate consequence of the definitions in Section~\ref{sec:estimator} rather than a result requiring proof.

\begin{lemma}[Nested special cases]
\label{lem:nesting}
\begin{enumerate}[label=(\alph*)]
\item \emph{Fixed donor pool.} If no donor becomes treated over the evaluation window, so that $\mathcal{D}_{g,h} = \mathcal{D}_g$ for all $h \leq H$, the transport penalty is never active on a changing risk set, and $\widehat\tau_{g,h}^{\,RT}$ reduces to a cohort-specific synthetic-control-with-DiD-adjustment estimator applied to a fixed donor pool throughout.
\item \emph{Independent horizon-specific fits.} If $\rho_h = 0$ for all $h$, the third term of \eqref{eq:rtsdid-obj} vanishes identically, so $\widehat\gamma_{g,h} = \widehat\gamma_{g,h}^{\,\mathrm{ind}}$ and RT-SC-DiD coincides with independently fitted horizon-specific synthetic controls.
\item \emph{Fixed-risk-set restriction.} If the donor pool used at every horizon $h \leq H$ is restricted by fiat to $\mathcal{D}_{g,H} = \{i : G_i > g+H\}$, RT-SC-DiD reduces to the conventional practice of restricting to donors untreated through the longest evaluation horizon.
\end{enumerate}
\end{lemma}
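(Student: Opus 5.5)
The plan is to verify each of the three parts directly from the definitions of the risk sets, the transport operator \eqref{eq:transport}, the objective \eqref{eq:rtsdid-obj}, and the estimator \eqref{eq:rtsdid}. Minimizers should be compared as whole problems rather than term by term. The common tool is strict convexity: for $\lambda>0$ each objective is a strictly convex quadratic on a compact simplex, so its minimizer is unique. Two problems with identical feasible sets and identical objectives therefore have identical solutions, and plugging identical weights into \eqref{eq:rtsdid} gives identical effect estimates. Where $\lambda=0$ and $B_{g,h}$ is not strictly convex, I will state the conclusion at the level of argmin sets, or under a common tie-breaking rule.

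For part (b), setting $\rho_h=0$ deletes the third term of \eqref{eq:rtsdid-obj}. What remains is $B_{g,h}(\gamma)+\lambda\|\gamma\|_2^2$ on $\Delta(\mathcal{D}_{g,h})$, which is exactly the independent problem defining $\widehat\gamma^{\,\mathrm{ind}}_{g,h}$. At $h=0$ the two problems agree by definition. One point needs a remark: $\gamma^{\mathrm{tr}}_{g,h}$ depends on the previous horizon's estimate, but with $\rho_h=0$ it enters with coefficient zero. So no recursion has to be unwound, and no well-definedness of the kernel is needed; the fallback kernel of Remark~\ref{rem:zero-mass} covers the latter anyway.

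For part (a), if $\mathcal{D}_{g,h}=\mathcal{D}_g$ for all $h\le H$, then every attrition set $\mathcal{A}_{g,h}$ is empty. The sum in \eqref{eq:transport} vanishes, so $\gamma^{\mathrm{tr}}_{g,h}=\widehat\gamma_{g,h-1}$, and this holds under either kernel. Since $\mathcal{P}_g$ and the donor set do not depend on $h$, $B_{g,h}=B_{g,0}$. By induction on $h$, with base case $h=0$ given by definition: if $\widehat\gamma_{g,h-1}=\widehat\gamma_{g,0}$, then $\widehat\gamma_{g,0}$ minimizes both $B+\lambda\|\cdot\|^2$ and the penalty $\rho_h\|\gamma-\widehat\gamma_{g,0}\|^2$ (the latter at value zero). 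It therefore minimizes their sum, and by uniqueness $\widehat\gamma_{g,h}=\widehat\gamma_{g,0}$. The penalty is thus inactive, and \eqref{eq:rtsdid} is an SC-with-DiD estimator using fixed weights on the fixed pool.

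For part (c), replacing every $\mathcal{D}_{g,h}$ by $\mathcal{D}_{g,H}$ puts us in the hypotheses of part (a) with $\mathcal{D}_g:=\mathcal{D}_{g,H}$. Part (a) then gives the conventional longest-horizon-pool estimator.

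The only real obstacle is interpretive rather than technical. The word ``reduces'' must be read as equality of estimators. This requires uniqueness of minimizers, hence $\lambda>0$ or explicit tie-breaking. The argument in (a) also uses the special fact that the previous minimizer also minimizes the unpenalized objective. I will make that step explicit rather than calling it immediate.
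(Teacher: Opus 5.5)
Your proposal is correct and takes the same route as the paper. The paper gives no argument beyond asserting that each case is immediate from the definitions in Section~\ref{sec:estimator}, and you supply the details it leaves implicit: an empty $\mathcal{A}_{g,h}$ gives $\gamma^{\mathrm{tr}}_{g,h}=\widehat\gamma_{g,h-1}$, the induction yields $\widehat\gamma_{g,h}=\widehat\gamma_{g,0}$ with zero penalty at the optimum, and (c) is an instance of (a).

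Two minor refinements are worth noting. When $\rho_h>0$, the penalty alone makes the $h\geq 1$ objective strictly convex, so tie-breaking in (a) and (c) matters only at $h=0$. In (b), the identity is one of weights, so the resulting effect is the DiD-adjusted estimate \eqref{eq:rtsdid} rather than the unadjusted $\widehat\tau^{\,\mathrm{ind}}_{g,h}$ of Section~\ref{sec:riskset}.
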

These cases are useful as implementation checks and as a way of confirming that RT-SC-DiD strictly generalizes the two conventional alternatives described in Section~\ref{sec:intro}, but they do not by themselves establish that intermediate $\rho_h$ improves on either boundary; that requires the risk comparison discussed after Theorem~\ref{thm:propagation} below.

\subsection{Same-period reoptimization distortion}

A point of emphasis: different horizons correspond to different calendar dates, $g+h-1$ versus $g+h$, and the untreated potential outcome $Y_{i,g+h}(\infty)$ can move sharply between adjacent dates for reasons that have nothing to do with donor composition. A change in $\widehat\tau_{g,h} - \widehat\tau_{g,h-1}$ is therefore \emph{not} automatically evidence of an artifact, and describing the goal of the transport penalty as producing a ``smooth'' or ``continuous'' counterfactual path invites exactly this misreading: the object worth controlling is not smoothness of $\widehat\tau_{g,\cdot}$ across $h$, but the part of the horizon-$h$ counterfactual that is attributable to \emph{reweighting at a fixed calendar date} rather than to new outcome data. We therefore define, at the single evaluation date $g+h$, the two counterfactual values implied by the reoptimized weights and by the transported reference,
\begin{equation}
\widehat Y_{g,g+h}^{\,\mathrm{new}} = \sum_{i \in \mathcal{D}_{g,h}} \widehat\gamma_{ig,h} Y_{i,g+h}, \qquad \widehat Y_{g,g+h}^{\,\mathrm{tr}} = \sum_{i \in \mathcal{D}_{g,h}} \gamma^{\mathrm{tr}}_{ig,h} Y_{i,g+h},
\end{equation}
and define the \emph{same-period reoptimization distortion}
\begin{equation}
R_{g,h} = \widehat Y_{g,g+h}^{\,\mathrm{new}} - \widehat Y_{g,g+h}^{\,\mathrm{tr}} = \sum_{i \in \mathcal{D}_{g,h}} \big(\widehat\gamma_{ig,h} - \gamma^{\mathrm{tr}}_{ig,h}\big) Y_{i,g+h}.
\end{equation}
Both terms are evaluated at the same calendar date $g+h$ using the same outcome data, so $R_{g,h}$ isolates the effect of \emph{who is reweighted and by how much}, holding the outcome realizations fixed; it is exactly the diagnostic $J_{g,h}$ of Section~\ref{sec:diagnostics} up to sign. We adopt this same-period framing, rather than a cross-horizon smoothness claim, as the object our main distortion bound (Proposition~\ref{prop:reopt} below) controls.

\begin{proposition}[Bounded reoptimization distortion]
\label{prop:reopt}
Suppose $|Y_{it}| \leq M$ for all $i,t$ (observed outcomes; the population statement under $Y_{it}(\infty)$ is identical). Then
\begin{equation}
\big| R_{g,h} \big| \;\leq\; M \, \big\| \widehat\gamma_{g,h} - \gamma^{\mathrm{tr}}_{g,h} \big\|_1 \;=\; M\, T_{g,h}^{(1)}.
\label{eq:l1bound}
\end{equation}
Writing $K_{g,h} = |\mathcal{D}_{g,h}|$ and using $\|x\|_1 \leq \sqrt{K_{g,h}}\, \|x\|_2$ gives the dimension-adjusted bound in terms of the $\ell_2$ transport distance $T_{g,h} = \|\widehat\gamma_{g,h} - \gamma^{\mathrm{tr}}_{g,h}\|_2$ used in the optimization penalty,
\begin{equation}
\big| R_{g,h} \big| \;\leq\; M \sqrt{K_{g,h}}\; T_{g,h}.
\label{eq:l2bound}
\end{equation}
\end{proposition}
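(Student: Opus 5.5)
The bound follows directly from the definition of $R_{g,h}$ as a finite weighted sum, followed by Hölder's inequality and a norm comparison. I would write
\[
R_{g,h} = \sum_{i \in \mathcal{D}_{g,h}} \big(\widehat\gamma_{ig,h} - \gamma^{\mathrm{tr}}_{ig,h}\big) Y_{i,g+h}.
\]
Both weight vectors are indexed by the same set $\mathcal{D}_{g,h}$, so the difference is a well-defined vector in $\mathbb{R}^{K_{g,h}}$. Here $\widehat\gamma_{g,h}$ lies in $\Delta(\mathcal{D}_{g,h})$ by \eqref{eq:rtsdid-obj}, and $\gamma^{\mathrm{tr}}_{g,h}$ is supported on $\mathcal{D}_{g,h}$ by \eqref{eq:transport} or its fallback version \eqref{eq:kernel-fallback}. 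The triangle inequality then gives
\[
|R_{g,h}| \le \sum_i |\widehat\gamma_{ig,h} - \gamma^{\mathrm{tr}}_{ig,h}|\,|Y_{i,g+h}|.
\]
Bounding each $|Y_{i,g+h}|$ by $M$ yields $M\|\widehat\gamma_{g,h} - \gamma^{\mathrm{tr}}_{g,h}\|_1 = M\,T^{(1)}_{g,h}$, which is \eqref{eq:l1bound}. Equivalently, this is Hölder's inequality with the $(\ell_1,\ell_\infty)$ pairing.

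For \eqref{eq:l2bound}, I would apply Cauchy--Schwarz to $\sum_i |x_i|\cdot 1$ with $x = \widehat\gamma_{g,h} - \gamma^{\mathrm{tr}}_{g,h}$. This gives $\|x\|_1 \le \sqrt{K_{g,h}}\,\|x\|_2$, and substituting into \eqref{eq:l1bound} finishes the argument. The parenthetical population version uses the same argument with $Y_{i,g+h}(\infty)$ in place of $Y_{i,g+h}$, under $|Y_{it}(\infty)|\le M$.

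There is no real obstacle. The only point needing care is confirming that $\gamma^{\mathrm{tr}}_{g,h}$ is well defined and supported on $\mathcal{D}_{g,h}$, so that the subtraction makes sense. The zero-surviving-mass issue of Remark~\ref{rem:zero-mass} is removed by the fallback kernel. As an optional sharpening, both vectors sum to one, so $\sum_i x_i = 0$. One may then replace $Y_{i,g+h}$ by $Y_{i,g+h} - c$ for any constant $c$, which gives $|R_{g,h}| \le \tfrac12 \big(\max_i Y_{i,g+h} - \min_i Y_{i,g+h}\big)\|x\|_1 \le M\|x\|_1$. This is not needed for the stated claim.
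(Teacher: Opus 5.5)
Your proof is correct and follows the paper's argument: Hölder's inequality with the $(\ell_1,\ell_\infty)$ pairing gives the $\ell_1$ bound, and the norm comparison $\|x\|_1 \le \sqrt{K_{g,h}}\,\|x\|_2$, which you justify via Cauchy--Schwarz, gives the $\ell_2$ bound. Your optional half-range sharpening is also valid: it uses $\sum_i x_i = 0$, which holds because both weight vectors lie on the simplex over $\mathcal{D}_{g,h}$. That sharpening is slightly stronger than what the paper states.
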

\begin{proof}
By Hölder's inequality, $|\sum_i c_i Y_i| \leq \|c\|_1 \max_i |Y_i| \leq M\|c\|_1$ for $c_i = \widehat\gamma_{ig,h} - \gamma^{\mathrm{tr}}_{ig,h}$, giving \eqref{eq:l1bound} directly. Inequality \eqref{eq:l2bound} follows from the standard norm inequality $\|c\|_1 \leq \sqrt{K_{g,h}}\|c\|_2$ on $\mathbb{R}^{K_{g,h}}$, which holds with equality when $|c_i|$ is constant across coordinates and is otherwise strict.
\end{proof}
Because the optimization problem \eqref{eq:rtsdid-obj} penalizes $\|\gamma - \gamma^{\mathrm{tr}}_{g,h}\|_2^2$, the natural diagnostic to report alongside the theorem is the $\ell_1$ quantity $T_{g,h}^{(1)}$ that maps directly to \eqref{eq:l1bound}; the $\ell_2$ quantity $T_{g,h}$ used inside the objective satisfies only the weaker, dimension-dependent bound \eqref{eq:l2bound}, since $K_{g,h}$ can be large when the donor pool is large. Section~\ref{sec:diagnostics} reports both.

\begin{remark}[Exact recovery under convex-hull preservation, informal]
\label{rem:recovery}
If $\bar\lambda_g \in \conv\{\lambda_i : i \in \mathcal{D}_{g,h}\}$ at every horizon and the pre-treatment window ``identifies the factor-loading representation,'' one would expect that as $L_g \to \infty$ the untreated counterfactual $\sum_i \gamma^*_{ig,h} Y_{it}(\infty)$ recovers $\bar Y_{g,t}(\infty)$ asymptotically. We state this only as an informal remark, not a proposition, because the phrase ``identifies the factor-loading representation'' is carrying essentially the entire result and has not been made precise here: a formal version would need to specify the factor rank, the variation and normalization of $f_t$, the behavior of the idiosyncratic errors $\varepsilon_{it}$, whether the recovering weights are unique or merely predictively equivalent, how the DiD intercept correction interacts with the recovery argument, whether $N$ is held fixed or grows alongside $L_g$, and how $L_g$, $\lambda$, and $\rho_h$ must jointly behave. We leave a fully specified version of this result, with these conditions made explicit, to future work.
\end{remark}

\begin{proposition}[Bias from naive renormalization]
\label{prop:naive}
Consider a single donor $j$ exiting between $h-1$ and $h$ with pre-attrition weight $w_j = \widehat\gamma_{jg,h-1} > 0$. Let $\mu_{h-1} = \sum_{i \in \mathcal{D}_{g,h-1}} \widehat\gamma_{ig,h-1} \lambda_i$ denote the pre-exit synthetic loading, and let $\mu_{-j} = (1-w_j)^{-1} \sum_{i \neq j} \widehat\gamma_{ig,h-1} \lambda_i$ denote the weighted average loading of the surviving donors under their \emph{original} (pre-renormalization) weights. Since $\mu_{h-1} = w_j \lambda_j + (1-w_j)\mu_{-j}$, proportional renormalization \eqref{eq:renorm} yields synthetic loading $\mu_{-j}$, and
\begin{equation}
\mu_{-j} - \mu_{h-1} = w_j\,(\mu_{-j} - \lambda_j), \qquad \text{so} \qquad \|\mu_{-j} - \mu_{h-1}\| = w_j\, \|\mu_{-j} - \lambda_j\|.
\end{equation}
Consequently, proportional renormalization preserves the pre-exit synthetic loading exactly if and only if $w_j = 0$ or $\lambda_j = \mu_{-j}$; whenever the exiting donor's loading differs from the surviving-weighted average, renormalization introduces loading bias of exact magnitude $w_j \|\mu_{-j} - \lambda_j\|$, even though the pre-exit representation was assumed exactly balanced. More generally, for a set of simultaneously exiting donors $\mathcal{A}_{g,h}$ with total mass $m = \sum_{j \in \mathcal{A}_{g,h}} \widehat\gamma_{jg,h-1}$, exiting-donor average loading $\mu_A = m^{-1}\sum_{j \in \mathcal{A}_{g,h}} \widehat\gamma_{jg,h-1}\lambda_j$, and surviving-donor average loading $\mu_S = (1-m)^{-1}\sum_{i \in \mathcal{D}_{g,h}} \widehat\gamma_{ig,h-1}\lambda_i$, the identity $\mu_{h-1} = m\mu_A + (1-m)\mu_S$ gives
\begin{equation}
\mu_S - \mu_{h-1} = m\,(\mu_S - \mu_A),
\end{equation}
so the loading bias from renormalization scales with the product of exited weight mass $m$ and the dissimilarity between exiting- and surviving-donor average loadings, $\|\mu_S - \mu_A\|$.
\end{proposition}
This directly motivates the two diagnostics of Section~\ref{sec:diagnostics}, $M_{g,h}^{\mathrm{exit}} = m$ and the companion trajectory-discrepancy statistic $\Lambda_{g,h}$ tracking $\|\mu_S - \mu_A\|$ (defined precisely in Section~\ref{sec:diagnostics} via weighted mean pre-treatment trajectories, since $\lambda_i$ is not directly observed). The similarity-weighted transport of \eqref{eq:transport} is designed so that mass exiting with loading $\lambda_j$ is redirected preferentially toward surviving donors with similar pre-treatment trajectories, which under Assumption~\ref{ass:stability} approximates matching on $\lambda_j$ and so is designed to reduce, though not in general eliminate, this bias relative to uniform renormalization.

\subsection{Error propagation across horizons}

The transport recursion $\widehat\gamma_{g,h-1} \to \gamma^{\mathrm{tr}}_{g,h} \to \widehat\gamma_{g,h}$ links every horizon's estimate to the previous one, so estimation error at an early horizon can propagate forward rather than being confined to the horizon at which it occurs. This distinguishes RT-SC-DiD from horizon-independent estimation ($\rho_h = 0$) and is a cost, not only a benefit, of the transport mechanism; it deserves a formal bound rather than being treated solely as a stabilizing feature.

Let $\gamma^*_{g,h}$ denote the oracle weights of Assumption~\ref{ass:span}, let $e_{g,h} = \widehat\gamma_{g,h} - \gamma^*_{g,h}$, and let $T_h(\gamma) \equiv \gamma^{\mathrm{tr}}_{g,h}(\gamma)$ denote the map from horizon-$(h-1)$ weights to the transported reference, defined throughout via \eqref{eq:transport} together with the fallback kernel \eqref{eq:kernel-fallback} of Remark~\ref{rem:zero-mass} (rather than the plain kernel \eqref{eq:kernel}, which can be undefined at zero surviving mass and so cannot serve as the domain for a Lipschitz assumption on all of $\Delta(\mathcal{D}_{g,h-1})$). We emphasize that $T_h$ is \emph{not} linear in its argument: $\pi_{ij,g,h}$ in \eqref{eq:kernel-fallback} has $\gamma$ appearing in both the numerator and the denominator, so $T_h$ is a ratio of linear functions of $\gamma$ and is therefore nonlinear in general; $T_h$ does not have a well-defined operator norm, and the recursion below requires Lipschitz continuity of $T_h$, stated as an explicit regularity condition rather than derived from linearity.

\begin{assumption}[Regularity of the transport map]
\label{ass:lipschitz}
$T_h$, defined via the fallback kernel \eqref{eq:kernel-fallback} with fallback weight $\epsilon_0 > 0$, is Lipschitz on the simplex $\Delta(\mathcal{D}_{g,h-1})$ with constant $L_h < \infty$: $\|T_h(\gamma) - T_h(\gamma')\| \leq L_h \|\gamma - \gamma'\|$ for all $\gamma, \gamma' \in \Delta(\mathcal{D}_{g,h-1})$.
\end{assumption}
Because $\Delta(\mathcal{D}_{g,h-1})$ is compact and, with $\epsilon_0 > 0$, the kernel denominator in \eqref{eq:kernel-fallback} is bounded away from zero everywhere on it (Remark~\ref{rem:zero-mass}), $T_h$ is a ratio of smooth functions with a non-vanishing denominator on a compact domain, which is the standard setting in which a finite Lipschitz constant is expected to exist; we do not, however, derive an explicit value of $L_h$ for \eqref{eq:kernel-fallback} in this paper. Theorem~\ref{thm:propagation} below is accordingly a conditional stability result given Assumption~\ref{ass:lipschitz}, not an unconditional property of \eqref{eq:kernel-fallback} that we have established from first principles.

\begin{theorem}[Recursive error propagation]
\label{thm:propagation}
Define the current-horizon optimization error $u_{g,h} = \|\widehat\gamma_{g,h} - T_h(\widehat\gamma_{g,h-1})\|$ and the transport approximation error $b^{\mathrm{tr}}_{g,h} = \|T_h(\gamma^*_{g,h-1}) - \gamma^*_{g,h}\|$ (the extent to which transporting the \emph{oracle} weights forward fails to reproduce the oracle weights at the new horizon). Then
\begin{equation}
e_{g,h} = \underbrace{\big[\widehat\gamma_{g,h} - T_h(\widehat\gamma_{g,h-1})\big]}_{\text{current error}} + \underbrace{\big[T_h(\widehat\gamma_{g,h-1}) - T_h(\gamma^*_{g,h-1})\big]}_{\text{propagated prior error}} + \underbrace{\big[T_h(\gamma^*_{g,h-1}) - \gamma^*_{g,h}\big]}_{\text{transport approximation error}},
\end{equation}
and, applying the triangle inequality and the Lipschitz property to the middle term,
\begin{equation}
\|e_{g,h}\| \;\leq\; u_{g,h} + L_h \|e_{g,h-1}\| + b^{\mathrm{tr}}_{g,h}.
\end{equation}
Unrolling the recursion from $h=0$ gives
\begin{equation}
\|e_{g,h}\| \;\leq\; \sum_{r=1}^{h} \Big(\prod_{q=r+1}^{h} L_q\Big) \big(u_{g,r} + b^{\mathrm{tr}}_{g,r}\big) \;+\; \Big(\prod_{q=1}^{h} L_q\Big) \|e_{g,0}\|.
\label{eq:propagation}
\end{equation}
\end{theorem}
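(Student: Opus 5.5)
The argument is a telescoping identity followed by the triangle inequality, Assumption~\ref{ass:lipschitz}, and a finite induction on $h$.

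First, I would verify the three-term decomposition of $e_{g,h}$ by adding and subtracting $T_h(\widehat\gamma_{g,h-1})$ and $T_h(\gamma^*_{g,h-1})$ in $e_{g,h} = \widehat\gamma_{g,h} - \gamma^*_{g,h}$. This is purely algebraic. The one point to check is that every quantity lives in a common vector space. $\widehat\gamma_{g,h}$, $\gamma^*_{g,h}$ and the images of $T_h$ are vectors indexed by $\mathcal{D}_{g,h}$, whereas $e_{g,h-1}$ is indexed by $\mathcal{D}_{g,h-1}$. I would either work on $\mathbb{R}^{K_{g,h}}$ at horizon $h$ and on $\mathbb{R}^{K_{g,h-1}}$ at horizon $h-1$, or use the zero-padding convention of Section~\ref{sec:pooling}. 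Both $\widehat\gamma_{g,h-1}$ and $\gamma^*_{g,h-1}$ lie in $\Delta(\mathcal{D}_{g,h-1})$, the first by the simplex constraint in \eqref{eq:rtsdid-obj} and the second by Assumption~\ref{ass:span}. Hence $T_h$ is defined at both points through the fallback kernel \eqref{eq:kernel-fallback}, and Assumption~\ref{ass:lipschitz} applies to this pair.

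Second, I would take norms. By the triangle inequality, the first term has norm $u_{g,h}$ and the third has norm $b^{\mathrm{tr}}_{g,h}$, both by definition. By Assumption~\ref{ass:lipschitz}, the middle term has norm at most $L_h\|\widehat\gamma_{g,h-1} - \gamma^*_{g,h-1}\| = L_h\|e_{g,h-1}\|$. This gives the one-step inequality. For it to hold as written, the norm in Assumption~\ref{ass:lipschitz} must match the norm used for $u$, $b^{\mathrm{tr}}$ and $e$. I would state that a single norm, say $\ell_2$, is fixed throughout, with the corresponding $L_h$.

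Third, I would unroll by induction on $h$. Write $\beta_r = u_{g,r} + b^{\mathrm{tr}}_{g,r}$ and adopt the empty-product convention $\prod_{q=h+1}^{h} L_q = 1$.
\begin{itemize}
\item For $h=0$ the sum is empty and the bound reads $\|e_{g,0}\| \le \|e_{g,0}\|$.
\item For $h = 1$, the one-step inequality gives exactly $\beta_1 + L_1\|e_{g,0}\|$.
\item For the inductive step, assume \eqref{eq:propagation} at $h-1$ and substitute it into $\|e_{g,h}\| \le \beta_h + L_h\|e_{g,h-1}\|$. Multiplying the inductive bound by $L_h \ge 0$ extends each product $\prod_{q=r+1}^{h-1}L_q$ to $\prod_{q=r+1}^{h}L_q$, and also $\prod_{q=1}^{h-1}L_q$ to $\prod_{q=1}^{h}L_q$. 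The new term $\beta_h$ is the $r=h$ summand, with coefficient equal to the empty product. This reproduces \eqref{eq:propagation}.
\end{itemize}
Nonnegativity of the $L_q$ is what allows the inequality to be multiplied through, and it holds automatically for Lipschitz constants.

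There is no serious obstacle. The analytic content is carried entirely by Assumption~\ref{ass:lipschitz}, which the theorem takes as given rather than derives. The step most likely to cause trouble is the bookkeeping of index sets and norms described above, specifically making sure the Lipschitz bound is invoked in the same norm and on the same simplex where both arguments are admissible. I would state this convention explicitly at the start of the proof.
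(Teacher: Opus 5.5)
Your proposal is correct and takes the same route as the paper's proof. You add and subtract $T_h(\widehat\gamma_{g,h-1})$ and $T_h(\gamma^*_{g,h-1})$, apply the triangle inequality with Assumption~\ref{ass:lipschitz} on the middle term, and induct on $h$. Your explicit checks, which the paper's three-sentence proof leaves implicit, do not change the argument: both arguments of $T_h$ lie in $\Delta(\mathcal{D}_{g,h-1})$, a single norm is used throughout, $L_q \ge 0$, and the empty-product convention holds.
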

\begin{proof}
The decomposition of $e_{g,h}$ follows by adding and subtracting $T_h(\widehat\gamma_{g,h-1})$ and $T_h(\gamma^*_{g,h-1})$. The one-step bound follows from the triangle inequality applied to the three bracketed terms and the Lipschitz bound on the middle term. Inequality \eqref{eq:propagation} follows by induction on $h$, substituting the one-step bound for $\|e_{g,h-1}\|$ repeatedly.
\end{proof}
If $L_q \leq 1$ for all $q$ -- which we have not derived or numerically established for the kernel \eqref{eq:kernel-fallback}, and state only as a conditional case since redistribution of probability mass by a normalized nonlinear map does not automatically bound its Lipschitz constant by one -- then \eqref{eq:propagation} implies that prior error is not multiplicatively amplified across horizons: error does not explode, though it does not vanish either, since it accumulates the per-horizon optimization and transport-approximation errors without discounting. The relevant contrast with $\rho_h = 0$ is not that estimation errors become statistically independent across horizons -- they generally remain dependent under either choice of $\rho_h$, since the horizon-specific estimators share the same treated cohort, overlapping donor units, overlapping pre-treatment outcomes, and correlated post-treatment shocks, none of which the transport mechanism creates or removes. The distinction $\rho_h = 0$ actually buys is the absence of \emph{recursive transmission through the estimator itself}: with $\rho_h = 0$, $\widehat\gamma_{g,h}$ does not depend functionally on $\widehat\gamma_{g,h-1}$, so there is no term analogous to $L_h \|e_{g,h-1}\|$ linking the two estimation errors mechanically, even though the underlying data-generating dependence across horizons remains. This is the correct basis for comparing $\rho_h > 0$ against $\rho_h = 0$: transport reduces the variance component of $u_{g,h}$ (by anchoring the horizon-$h$ optimization, which shrinks its effective search space) at the cost of introducing mechanical propagation through the $L_h\|e_{g,h-1}\|$ term and a potential bias through $b^{\mathrm{tr}}_{g,h}$ whenever the oracle weights themselves are not exactly transport-consistent, i.e., whenever $T_h(\gamma^*_{g,h-1}) \neq \gamma^*_{g,h}$. A more complete risk comparison would formalize this trade-off under a local-stability model for the oracle path, e.g., $\|\gamma^*_{g,h} - T_h(\gamma^*_{g,h-1})\| \leq \delta_{g,h}$ for small $\delta_{g,h}$, and derive conditions on $(\delta_{g,h}, \rho_h)$ under which expected squared error under transport is lower than under independent re-estimation; we state this as a natural next step rather than a proved result here, since it requires a variance model for $u_{g,h}$ as a function of $\rho_h$ that we have not yet derived. Section~\ref{sec:simulation-illustration} reports a first, purely empirical (simulation-based) look at this trade-off in place of the still-missing analytical risk comparison.

\section{Tuning-parameter selection}
\label{sec:tuning}

The tuning parameters $(\lambda, \rho_h, \nu, \kappa)$ should be chosen without reference to the treated cohort's post-treatment outcomes. We propose a donor-only placebo-validation procedure:
\begin{enumerate}[itemsep=1pt]
\item select a donor untreated through a pseudo-evaluation horizon;
\item assign it a placebo adoption date reproducing a plausible attrition pattern (e.g., matched to the empirical distribution of $G_i - g$ across real donors of the cohort under study);
\item reconstruct its subsequent untreated outcomes using the remaining donors, following the same estimator with candidate parameters $\theta = (\lambda,\rho,\nu,\kappa)$;
\item compute horizon-specific prediction error against the placebo unit's actual (untreated) outcomes;
\item repeat over many donors and placebo dates, and choose $\theta$ minimizing average placebo loss,
\begin{equation}
CV(\theta) = \sum_{b=1}^B \sum_{h=0}^{H_b} \zeta_h \Big[ Y_{i_b, g_b+h} - \widehat Y_{i_b,g_b+h}^{(-i_b)}(0;\theta) \Big]^2.
\end{equation}
\end{enumerate}
This directly targets the estimator's central task -- predicting untreated outcomes as the eligible donor pool contracts -- rather than an auxiliary or unrelated loss.

\section{Inference}
\label{sec:inference}

\paragraph{Cohort-level bootstrap.} Resample independent units or clusters, reconstruct all risk sets, transport operators, and weights, and recompute the full estimator. This captures uncertainty from cohort composition, donor selection, weight estimation, and event-time aggregation jointly. We flag that a naive i.i.d.\ unit (or cluster) bootstrap is \emph{not} automatically valid for synthetic-control-type estimators: validity typically requires either a large number of independent clusters, or an asymptotic regime in which both the treated cohort and the donor pool grow, and is known to fail in some small-$N$ synthetic control settings even without the additional complication of a horizon-varying donor pool. We do not present the bootstrap as validated for RT-SC-DiD by default; the recent staggered-synthetic-control inference literature \citep{cao2026synthetic,cattaneo2025scpi} develops conditions under which specific resampling and prediction-interval schemes are valid in exactly this class of designs, and a full treatment of RT-SC-DiD inference should build on, rather than bypass, that literature.

\paragraph{Placebo and permutation inference.} Assign placebo adoption dates to untreated or not-yet-treated units and compare the observed statistic to its placebo distribution, e.g., via the standardized statistic $S_{g,h} = \widehat\tau_{g,h} / \widehat\sigma_{g,h}^{\,\mathrm{placebo}}$.

\paragraph{Conformal counterfactual intervals.} Following \citet{chernozhukov2021exact}, pre-treatment residuals and untreated placebo prediction errors can be used to construct prediction intervals for $Y_{g,g+h}(\infty)$, translating into intervals for $\tau_{g,h}$. We avoid describing these as ``distribution-free'' without qualification: \citet{chernozhukov2021exact}'s exact-validity guarantee requires the estimated residual sequence to be exchangeable (e.g., approximately satisfied under i.i.d.\ data), and their approximate-validity guarantee under serial dependence requires additional, method-specific conditions that would need to be verified for the RT-SC-DiD transport recursion specifically, since the recursion in Theorem~\ref{thm:propagation} induces dependence across horizons that a generic exchangeability assumption does not automatically accommodate.

\paragraph{Influence-function augmentation.} A secondary extension augments $\widehat\tau_{g,h}^{\,RT}$ with a cross-fitted outcome-regression correction $\widehat R_{g,h}$ for residual imbalance, $\widehat\tau_{g,h}^{\,AUG} = \widehat\tau_{g,h}^{\,RT} + \widehat R_{g,h}$, in the spirit of augmented / doubly robust DiD estimators. We present this as a direction for future work rather than a fully developed result, pending a formal double-robustness proof under Assumptions~\ref{ass:span}--\ref{ass:attrition}.

\section{Donor-support diagnostics}
\label{sec:diagnostics}

We recommend reporting the following quantities alongside every estimated $\widehat\tau_{g,h}^{\,RT}$.

\paragraph{Effective donor count.} $N_{g,h}^{\mathrm{eff}} = \big( \sum_{i \in \mathcal{D}_{g,h}} \widehat\gamma_{ig,h}^2 \big)^{-1}$, the inverse Herfindahl index of the weight vector.

\paragraph{Attrited weight mass.} $M_{g,h}^{\mathrm{exit}} = \sum_{i \in \mathcal{A}_{g,h}} \widehat\gamma_{ig,h-1}$, the fraction of the previous horizon's synthetic control invalidated at $h$; this is the mass term $m$ in Proposition~\ref{prop:naive}.

\paragraph{Exit-survivor loading discrepancy.} Since $\lambda_i$ is not directly observed, we approximate $\|\mu_S - \mu_A\|$ from Proposition~\ref{prop:naive} using weighted mean pre-treatment \emph{trajectories} rather than the pairwise scalar distance $d_{ij,g}$, which is not itself a vector and so cannot be differenced the way a trajectory can. Let $Y_{i,\mathrm{pre}} = (Y_{is})_{s \in \mathcal{P}_g}$ denote donor $i$'s pre-treatment trajectory, and define the exiting- and surviving-donor weighted mean trajectories under the pre-attrition weights,
\begin{equation}
\bar Y_{A,g,h} = \frac{1}{m_{g,h}} \sum_{j \in \mathcal{A}_{g,h}} \widehat\gamma_{jg,h-1}\, Y_{j,\mathrm{pre}}, \qquad \bar Y_{S,g,h} = \frac{1}{1-m_{g,h}} \sum_{i \in \mathcal{D}_{g,h}} \widehat\gamma_{ig,h-1}\, Y_{i,\mathrm{pre}},
\end{equation}
with $m_{g,h} = M_{g,h}^{\mathrm{exit}}$ as above, and set $\Lambda_{g,h} = \| \bar Y_{S,g,h} - \bar Y_{A,g,h} \|_V$ for a chosen norm $V$ (e.g., the $v_{g,s}$-weighted Euclidean norm used elsewhere in the paper). Proposition~\ref{prop:naive} implies that $M_{g,h}^{\mathrm{exit}} \times \Lambda_{g,h}$, rather than $M_{g,h}^{\mathrm{exit}}$ alone, is the quantity that governs naive-renormalization bias (with $\Lambda_{g,h}$ standing in for the unobserved $\|\mu_S - \mu_A\|$), so we recommend reporting the product alongside each component.

\paragraph{Transport distance.} We report both the $\ell_1$ transport distance $T_{g,h}^{(1)} = \|\widehat\gamma_{g,h} - \gamma^{\mathrm{tr}}_{g,h}\|_1$, which by Proposition~\ref{prop:reopt} bounds the same-period reoptimization distortion $|R_{g,h}|$ directly (Eq.~\eqref{eq:l1bound}), and the $\ell_2$ transport distance $T_{g,h} = \|\widehat\gamma_{g,h} - \gamma^{\mathrm{tr}}_{g,h}\|_2$ used inside the optimization penalty in \eqref{eq:rtsdid-obj}, which bounds $|R_{g,h}|$ only after multiplying by $\sqrt{K_{g,h}}$ (Eq.~\eqref{eq:l2bound}). Reporting $T_{g,h}^{(1)}$ alongside $T_{g,h}$ avoids overstating what the $\ell_2$ quantity alone implies about counterfactual distortion.

\paragraph{Same-period reoptimization distortion.} $R_{g,h} = \sum_i (\widehat\gamma_{ig,h} - \gamma^{\mathrm{tr}}_{ig,h}) Y_{i,g+h}$, evaluated at the single calendar date $g+h$ (Section~\ref{sec:theory}); a large value warns that part of an estimated change in $\widehat\tau_{g,h}$ relative to $\widehat\tau_{g,h-1}$ may be attributable to reweighting rather than to new outcome data, as distinct from genuine movement in $Y_{i,g+h}(\infty)$ itself, which $R_{g,h}$ does not and is not intended to flag.

\paragraph{Pre-treatment RMSPE.} $RMSPE_{g,h} = \big( L_g^{-1} \sum_{s \in \mathcal{P}_g} (\bar Y_{g,s} - \sum_i \widehat\gamma_{ig,h} Y_{is})^2 \big)^{1/2}$.

We suggest that every event-study figure produced under RT-SC-DiD be accompanied by a companion support panel plotting $N_{g,h}^{\mathrm{eff}}$, $M_{g,h}^{\mathrm{exit}}$, $T_{g,h}^{(1)}$, $T_{g,h}$, and $RMSPE_{g,h}$ against $h$, so that readers can assess how much of the reported dynamics rests on a well-supported donor pool versus a thin or rapidly changing one.

\section{Monte Carlo design}
\label{sec:simulation}

\subsection{Data-generating process}
We propose an untreated potential-outcome model
\begin{equation}
Y_{it}(\infty) = \alpha_i + \delta_t + \lambda_i^\top f_t + \beta^\top X_{it} + \varepsilon_{it},
\end{equation}
with treatment-effect paths either linear, $\tau_{i,h} = \tau_0 + \tau_1 h + \eta_i$, or saturating, $\tau_{i,h} = \tau_{\max}(1 - e^{-\psi h}) + \eta_i$. Treatment timing among not-yet-treated units may depend on latent structure,
\begin{equation}
\Pr(G_i = t \mid G_i \geq t) = \mathrm{logit}^{-1}(a_t + b^\top X_i + c^\top \lambda_i),
\end{equation}
allowing us to control the extent to which donor attrition is informative (Assumption~\ref{ass:attrition}) versus effectively random.

\subsection{Design factors}
We propose varying: (i) the never-treated share ($0\%, 10\%, 30\%, 50\%$); (ii) speed of staggered rollout (slow, moderate, rapid); (iii) the pattern of donor attrition (diffuse, concentrated at a single horizon, concentrated among high-weight donors); (iv) latent-factor dimension ($F = 1, 3, 5$); (v) pre-treatment length ($L = 5, 10, 20, 40$); (vi) the basis for treatment-timing selection (random, observed-covariate-based, latent-factor-based, recent-shock-based); (vii) treatment-effect heterogeneity (homogeneous, cohort-heterogeneous, unit-heterogeneous, calendar-time-heterogeneous); and (viii) spillovers (absent, weak, moderate), the last of which is included specifically to probe sensitivity to violations of the no-interference assumption.

\subsection{Comparator estimators}
We propose comparing RT-SC-DiD against: Callaway--Sant'Anna with not-yet-treated controls, with and without covariate adjustment \citep{callaway2021difference}; standard and staggered synthetic DiD \citep{arkhangelsky2021synthetic}; partially pooled staggered synthetic control \citep{ben2021synthetic}; matrix completion \citep{xu2017generalized,athey2021matrix}; the staggered synthetic control inference approach of \citet{cao2026synthetic} and synthetic control prediction intervals \citep{cattaneo2025scpi}; fixed-final-horizon synthetic control; independently re-estimated horizon-specific synthetic control ($\rho_h = 0$); and naive deletion-and-renormalization, implemented as a separate, non-reoptimizing comparator that deletes newly ineligible donors and rescales surviving weights proportionally via \eqref{eq:renorm} rather than as a limiting case of the RT-SC-DiD objective. (As $\rho_h \to \infty$ in \eqref{eq:rtsdid-obj}, the RT-SC-DiD optimizer converges to the similarity-based transported reference $\gamma^{\mathrm{tr}}_{g,h}$ itself, which coincides with naive proportional renormalization only in the further, iterated limit $\kappa \to \infty$, $\epsilon_0 \to 0$ discussed in Remark~\ref{rem:zero-mass}; the two are not the same comparator in general, so we implement naive renormalization directly rather than via $\rho_h \to \infty$.)

\subsection{Evaluation criteria}
For each $(g,h)$ we propose reporting bias, RMSE, coverage, and interval length for $\tau_{g,h}$, together with pre-treatment RMSPE, effective donor count, and the same-period reoptimization distortion $R_{g,h}$. A metric specific to this paper's concern is a decomposed \emph{counterfactual discontinuity index}. The naive version,
\begin{equation}
CDI^{\mathrm{total}} = \sum_{g,h} \big| \Delta \widehat Y_{g,h}(\infty) - \Delta Y_{g,h}(\infty) \big|,
\end{equation}
comparing estimated and true horizon-to-horizon changes in the imputed untreated path, conflates two distinct sources of error: ordinary prediction error in $\widehat Y_{g,h}(\infty)$ that would occur even under a fixed donor pool, and error specifically attributable to donor-composition change. To isolate the latter we report $CDI^{\mathrm{total}}$ alongside two components benchmarked against the \emph{oracle-transported} counterfactual $Y_{g,g+h}^{\mathrm{tr},*} = \sum_i \gamma^{\mathrm{tr},*}_{ig,h} Y_{i,g+h}(\infty)$ (the transport of the true oracle weights, which is well defined in simulation where $\gamma^*_{g,h}$ is known):
\begin{align}
CDI^{\mathrm{fit}} &= \sum_{g,h} \big| \widehat Y^{\,\mathrm{tr}}_{g,g+h} - Y_{g,g+h}^{\mathrm{tr},*} \big|, \\
CDI^{\mathrm{reopt}} &= \sum_{g,h} \big| R_{g,h} - R^*_{g,h} \big|, \qquad R^*_{g,h} = \sum_i \big(\gamma^*_{ig,h} - \gamma^{\mathrm{tr},*}_{ig,h}\big) Y_{i,g+h}(\infty),
\end{align}
so that $CDI^{\mathrm{fit}}$ isolates ordinary estimation error in the transported reference and $CDI^{\mathrm{reopt}}$ isolates excess reoptimization distortion relative to what the oracle weights themselves would require. This decomposition is only available in simulation, where $\gamma^*_{g,h}$ is known by construction; in the empirical application of Section~\ref{sec:application} we rely instead on the observable diagnostics of Section~\ref{sec:diagnostics} as proxies.

\subsection{Illustrative pilot simulation}
\label{sec:simulation-illustration}

Sections~\ref{sec:simulation}.1--\ref{sec:simulation}.3 describe the full design we consider appropriate for a completed working paper. Here we report a single, deliberately small pilot simulation from that design, run for this draft, to give a first empirical check of whether the transport mechanism behaves as the theory of Section~\ref{sec:theory} suggests, rather than leaving the comparison entirely unexecuted. This is not a substitute for the full study; sample sizes and replication counts are modest and the design is simplified in ways we flag below.

\paragraph{Design.} A single focal cohort $g$ is treated at $t=12$ in a panel of $T=34$ periods. Untreated potential outcomes follow the interactive-fixed-effects model $Y_{it}(\infty) = \alpha_i + \delta_t + \lambda_i^\top f_t + \varepsilon_{it}$ with $F=2$ latent factors following independent random walks, unit and cohort fixed effects, and $\varepsilon_{it} \sim N(0, 0.5^2)$. The donor pool consists of $18$ never-treated units and four blocks of $6$ temporarily eligible donors that exit the risk set at $t \in \{16,20,24,28\}$ respectively, so the risk set $\mathcal{D}_{g,h}$ shrinks four times over the evaluation window $h=0,\dots,14$. The true treatment-effect path is the saturating form $\tau_h = 2(1-e^{-0.25h})$. We compare: independent horizon-specific SCM ($\rho_h=0$); RT-SC-DiD with a fixed $\rho=3$; fixed-final-horizon SCM (donors restricted to those surviving through $h=14$ from the start); naive deletion-and-renormalization (weights fit once at $h=0$ and thereafter only renormalized via \eqref{eq:renorm}, with a uniform fallback when surviving mass is numerically zero, and never re-optimized); and a not-yet-treated simple-mean comparator with a DiD baseline correction, in the spirit of \citet{callaway2021difference} but without covariate adjustment or synthetic weighting. All synthetic weights are fit by ridge-penalized, simplex-constrained least squares on an 8-period pre-treatment window ($\lambda = 10^{-3}$); the transport kernel uses $\kappa=1$ and the zero-mass fallback of Remark~\ref{rem:zero-mass} with $\epsilon_0 = 10^{-6}$. Results are averaged over $80$ Monte Carlo replications, each redrawing $\alpha_i$, $\delta_t$, $f_t$, and $\varepsilon_{it}$ (donor and focal-cohort loadings $\lambda_i$ are redrawn once per replication as well; the focal cohort's loading is held at a value close to the donor pool's central range across replications). Code for this simulation is included with the source of this paper.

\paragraph{Results.} Table~\ref{tab:sim-results} reports bias and RMSE by event time $h$ for the five methods, at every second horizon. Figure~\ref{fig:sim-main} plots the full bias and RMSE paths. RT-SC-DiD attains the lowest or near-lowest RMSE at nearly every horizon and has the lowest average absolute bias among the four synthetic-control variants; the advantage over independent horizon-specific SCM is modest but consistent, and grows somewhat at longer horizons where more donor exits have accumulated. Fixed-final-horizon SCM is dominated by RT-SC-DiD and independent SCM at every reported horizon in this design, consistent with the motivating claim that restricting to the smallest, longest-surviving donor pool discards useful short-horizon information. Naive deletion-and-renormalization has RMSE comparable to or worse than the other SCM variants at short horizons and visibly worse at long horizons ($h=12,14$), consistent with Proposition~\ref{prop:naive}: as donor exits accumulate, uncorrected renormalization increasingly misallocates weight relative to the more targeted transport of \eqref{eq:transport}--\eqref{eq:kernel-fallback}. The not-yet-treated simple-mean comparator has the largest bias at most horizons, as expected since it does not perform any pre-treatment balancing, though its RMSE is occasionally competitive because it also carries no weight-estimation variance.

\begin{table}[htbp]
\centering
\small
\begin{tabular}{r|c|cc|cc|cc|cc|cc}
\toprule
& & \multicolumn{2}{c|}{Indep.\ ($\rho=0$)} & \multicolumn{2}{c|}{RT-SC-DiD ($\rho=3$)} & \multicolumn{2}{c|}{Fixed-horizon} & \multicolumn{2}{c|}{Naive renorm.} & \multicolumn{2}{c}{NYT mean} \\
$h$ & $\tau_h$ & Bias & RMSE & Bias & RMSE & Bias & RMSE & Bias & RMSE & Bias & RMSE \\
\midrule
0  & 0.00 & $-$0.01 & 0.36 & $-$0.01 & 0.36 & 0.01    & 0.43 & $-$0.01 & 0.36 & $-$0.02 & 0.39 \\
2  & 0.79 & 0.01    & 0.48 & 0.01    & 0.48 & $-$0.01 & 0.53 & 0.01    & 0.48 & $-$0.05 & 0.47 \\
4  & 1.26 & $-$0.03 & 0.49 & $-$0.03 & 0.47 & $-$0.08 & 0.54 & $-$0.07 & 0.53 & $-$0.08 & 0.54 \\
6  & 1.55 & $-$0.08 & 0.66 & $-$0.08 & 0.64 & $-$0.13 & 0.75 & $-$0.12 & 0.69 & $-$0.18 & 0.63 \\
8  & 1.73 & $-$0.05 & 0.77 & $-$0.04 & 0.73 & $-$0.15 & 0.79 & $-$0.12 & 0.93 & $-$0.11 & 0.75 \\
10 & 1.84 & $-$0.12 & 0.78 & $-$0.11 & 0.74 & $-$0.17 & 0.79 & $-$0.19 & 0.95 & $-$0.17 & 0.86 \\
12 & 1.90 & $-$0.24 & 0.98 & $-$0.09 & 0.94 & $-$0.24 & 0.98 & $-$0.15 & 1.28 & $-$0.30 & 0.90 \\
14 & 1.94 & $-$0.21 & 1.06 & $-$0.14 & 1.01 & $-$0.21 & 1.06 & $-$0.12 & 1.42 & $-$0.32 & 0.99 \\
\bottomrule
\end{tabular}
\caption{Bias and RMSE by event time $h$, averaged over 80 Monte Carlo replications of the pilot design described in Section~\ref{sec:simulation-illustration}. NYT mean is the not-yet-treated simple-mean comparator with DiD baseline correction.}
\label{tab:sim-results}
\end{table}

\begin{figure}[htbp]
\centering
\includegraphics[width=0.95\textwidth]{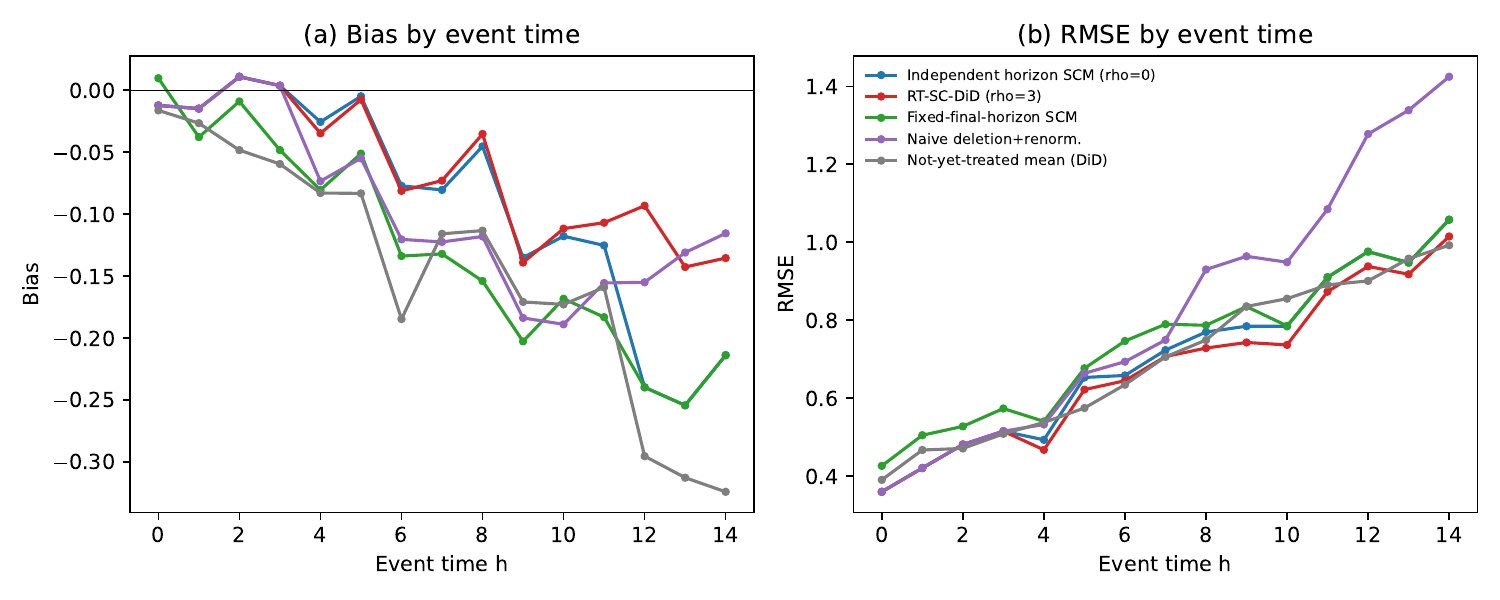}
\caption{Bias (a) and RMSE (b) by event time $h$ for the five methods in Table~\ref{tab:sim-results}, full horizon range $h=0,\dots,14$.}
\label{fig:sim-main}
\end{figure}

\paragraph{Transport-strength sensitivity.} Figure~\ref{fig:sim-rho} and Table~\ref{tab:sim-rho} repeat the RT-SC-DiD arm alone across $\rho \in \{0, 0.3, 3, 50\}$, spanning independent horizon-specific fitting ($\rho=0$) through strong anchoring to the transported reference ($\rho=50$; this is not the same as naive renormalization, since the horizon-$h$ problem is still re-optimized subject to the penalty rather than skipped entirely, and since our $\kappa=1$ in this design keeps the transported reference itself similarity-based rather than close to uniform renormalization). Consistent with the bias--variance motivation for transport regularization set out in Section~\ref{sec:theory} -- though not a result that theory proves -- average absolute bias falls monotonically as $\rho$ increases (heavier anchoring to a stable reference reduces erratic reoptimization), while RMSE is minimized at an intermediate value ($\rho=0.3$ in this design) and rises again at $\rho=50$, where the estimator is anchored so heavily to the transported reference that it under-responds to genuine pre-treatment imbalance. $\rho=0$ and very large $\rho$ are both dominated by an intermediate choice on RMSE in this design, and the specific minimizing $\rho$ is design-dependent, which is exactly why Section~\ref{sec:tuning} proposes selecting it by donor-only placebo cross-validation rather than fixing it a priori. We caution that this is one design with $40$ replications per $\rho$ value on a coarse four-point grid; it illustrates the mechanism rather than establishing the optimal $\rho$ in any generality.

\begin{table}[htbp]
\centering
\small
\begin{tabular}{lcccc}
\toprule
& $\rho=0$ & $\rho=0.3$ & $\rho=3$ & $\rho=50$ \\
\midrule
Mean $|$bias$|$ across $h$ & 0.079 & 0.053 & 0.050 & 0.046 \\
Mean RMSE across $h$ & 0.639 & \textbf{0.567} & 0.617 & 0.621 \\
\bottomrule
\end{tabular}
\caption{RT-SC-DiD sensitivity to the transport strength $\rho$, averaged over event time $h=0,\dots,14$ and 40 Monte Carlo replications per $\rho$ value.}
\label{tab:sim-rho}
\end{table}

\begin{figure}[htbp]
\centering
\includegraphics[width=0.95\textwidth]{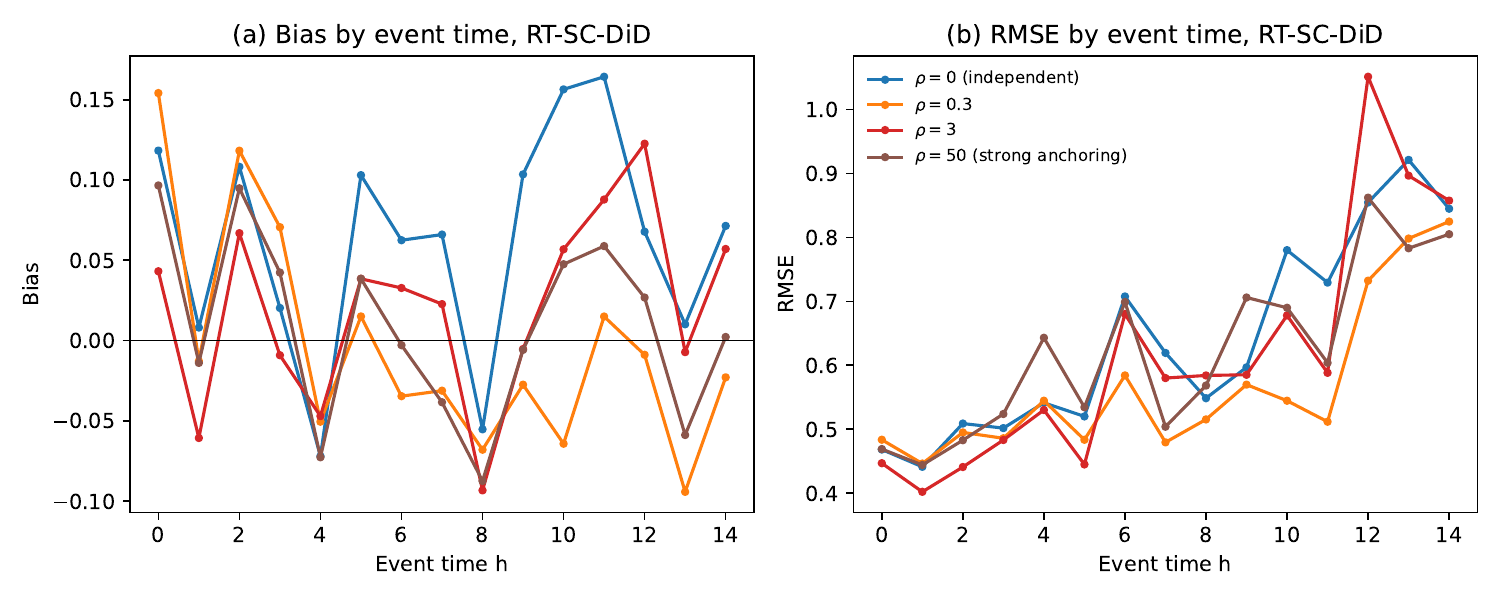}
\caption{RT-SC-DiD bias (a) and RMSE (b) by event time $h$, across transport strength $\rho \in \{0, 0.3, 3, 50\}$.}
\label{fig:sim-rho}
\end{figure}

\paragraph{Limitations of this pilot.} This simulation is small on every dimension listed in Section~\ref{sec:simulation}.2: one never-treated share, one rollout speed, one attrition pattern, one factor dimension, one pre-treatment length, random (not covariate- or loading-based) treatment timing, one treatment-effect path, and no spillovers. It uses $80$ replications for the main method comparison (and only $40$ per value for the transport-strength sensitivity analysis), an order of magnitude below what we would consider adequate for reporting Monte Carlo standard errors on the bias and RMSE estimates themselves. We report it because a single honest illustrative check, with its limitations stated, is more informative than no check at all, and because it directly addresses whether the transport mechanism does anything at all in a setting built to match the problem it targets -- but the full design of Section~\ref{sec:simulation}.1--\ref{sec:simulation}.3 remains unexecuted, and we do not treat this pilot as validating the method beyond this one design.

\section{Empirical application}
\label{sec:application}

An informative application should combine: many treatment-adoption cohorts; a long pre-treatment panel; relatively few never-treated units, so that not-yet-treated donors carry real informational weight; several later-treated units usable as temporary donors; plausible heterogeneity in untreated trends across units; well-recorded treatment dates; and limited anticipation and spillovers. Candidate settings include environmental regulations adopted by jurisdictions at staggered dates, phased rollouts of hospital technologies, staggered municipal transportation or educational policies, firm-level staggered platform interventions, regional labor-market regulations, and phased eligibility or reimbursement rules.

Rather than reporting only that RT-SC-DiD changes the point estimate relative to conventional approaches, we recommend that an application include a decomposition addressing: (1) how much additional predictive information is gained from temporary donors relative to a fixed-final-horizon pool; (2) how much weight mass is lost as donors become treated, horizon by horizon; (3) how the transport penalty redistributes that mass, and to which donors; (4) whether conventional horizon-by-horizon synthetic control produces visible jumps that RT-SC-DiD does not; and (5) how sensitive the substantive conclusion is to restricting entirely to donors untreated through the final horizon.

\section{Falsification and sensitivity analysis}
\label{sec:falsification}

\paragraph{Pre-treatment placebo effects.} Estimate $\widehat\tau_{g,h}$ for $h < 0$ using pseudo-adoption dates; large placebo effects indicate inadequate balance or unstable transport.

\paragraph{Leave-one-donor-out.} Recompute $\widehat\tau_{g,h}^{(-j)}$ omitting donor $j$ and report $L_{g,h} = \max_j |\widehat\tau_{g,h}^{(-j)} - \widehat\tau_{g,h}|$.

\paragraph{Treatment-date perturbation.} Shift adoption dates within plausible administrative uncertainty and recompute.

\paragraph{Anticipation windows.} Exclude donors within $a$ periods of their own adoption date, $\mathcal{D}_{g,h}^{(a)} = \{i : G_i > g+h+a\}$, to probe sensitivity to mild anticipation.

\paragraph{Fixed-pool comparison.} Compare against a conservative pool restricted to units untreated through $g+H$.

\paragraph{Transport-strength path.} Plot estimates over a grid of $\rho \in \{0, \rho_1, \rho_2, \dots, \infty\}$, from independent horizon fits ($\rho = 0$) to maximal continuity ($\rho \to \infty$), to visualize the bias--stability trade-off directly.

\section{Discussion and scope}
\label{sec:discussion}

We want to be explicit about what this paper does and does not claim. It does not claim to be the first synthetic-control estimator for staggered adoption, the first method to use not-yet-treated units as donors, or a fundamentally new causal framework distinct from Callaway--Sant'Anna-style group-time effects; nor does it claim that synthetic balancing removes the need for identifying assumptions, that later-treated units are automatically valid controls, or that treatment timing among donors is ignorable simply because they have not yet been treated -- Assumption~\ref{ass:attrition} is required precisely because this is not automatic. The defensible claim is narrower: this paper proposes a specific way to construct a coherent \emph{sequence} of synthetic counterfactuals when donor eligibility changes mechanically across post-treatment horizons, together with a transport mechanism whose properties (Propositions~\ref{prop:reopt} and~\ref{prop:naive}, and Theorem~\ref{thm:propagation}) are stated and proved, and a diagnostic toolkit for assessing when the resulting estimates rest on adequate donor support. Two limitations follow directly from this scope. First, the benefit of RT-SC-DiD relative to a fixed-final-horizon design is expected to be largest when no large never-treated group exists and later-treated units carry substantial predictive information at short horizons; when a large stable never-treated group is available, the gains from horizon-varying donor pools are likely to be modest, and simpler estimators may be preferred on grounds of transparency. Second, the method's validity continues to rest on Assumption~\ref{ass:attrition}, which is not testable from the observed data alone; the falsification checks in Section~\ref{sec:falsification} provide indirect evidence but not proof.

\section{Conclusion}
\label{sec:conclusion}

Staggered treatment-adoption designs create a donor pool that is not fixed but contracts mechanically, and informatively, as units enter treatment. Existing staggered synthetic control and difference-in-differences methods already recognize that eligible controls for a given cohort depend on the evaluation horizon, but common practice resolves this either by discarding temporarily eligible donors or by re-estimating weights independently at each horizon, at the cost of either lost information or reweighting-induced instability. We have proposed RT-SC-DiD, which fits horizon-specific synthetic weights on the full currently valid donor risk set while regularizing toward a similarity-weighted transport of the previous horizon's weights, combined with a DiD-style baseline correction and an optional partially pooled objective across cohorts; we have been explicit that this baseline correction is not yet a fully optimized SDiD time-weight structure, and have named the estimator accordingly, leaving genuine SDiD time weighting as a stated extension. We have separated identification (an oracle statement), estimation consistency (an open item requiring further work), and the finite-sample bias introduced by the transport penalty itself, rather than treating consistent estimation as given. We have proved that $\ell_1$ transport distance directly bounds the same-period reoptimization distortion, with the corresponding dimension-adjusted bound for the $\ell_2$ distance used inside the optimization objective; corrected and completed the characterization of the bias from naive deletion-and-renormalization in terms of exited weight mass and exit-survivor loading discrepancy; and derived a recursive bound on how estimation and transport-approximation error propagate across horizons under an explicitly stated (not derived) Lipschitz regularity condition on the transport map, showing that the mechanism trades reduced per-horizon variance for genuine, bounded-but-nonzero error propagation rather than being a costless stabilizer. We have also laid out a donor-only placebo procedure for tuning-parameter selection, a set of donor-support diagnostics including both $\ell_1$ and $\ell_2$ transport measures, and a fuller Monte Carlo and empirical-application protocol -- with a decomposed counterfactual-discontinuity metric -- for evaluating the method against existing staggered DiD, synthetic control, synthetic DiD, staggered-SCM inference, and matrix-completion estimators. A small illustrative pilot simulation, reported in Section~\ref{sec:simulation-illustration}, shows RT-SC-DiD attaining the lowest or near-lowest RMSE among synthetic-control variants at most horizons and an intermediate transport strength minimizing RMSE, consistent with the bias--variance motivation of the method rather than a result our theory proves, though this single 80-replication design does not substitute for the fuller study we propose, and no empirical application is yet reported. We regard the central contribution as a formal treatment of the informative contraction of the admissible donor set across event-time horizons -- a problem that existing methods implicitly confront but do not, to our knowledge, address with a proved distortion bound and an accompanying diagnostic toolkit -- rather than as a new causal-inference framework in its own right or as a completed synthetic difference-in-differences estimator.

\bibliographystyle{apalike}

\end{document}